\documentclass[journal]{IEEEtran}
%

\usepackage[utf8]{inputenc} 
\usepackage[T1]{fontenc}    
\usepackage{hyperref}       
\usepackage{url}            
\usepackage{booktabs}       
\usepackage{microtype}      
\usepackage{amsfonts}       
\usepackage{nicefrac}       

\usepackage{bm}
\usepackage{amsmath,amssymb,amsthm}
\usepackage{mathtools}

\usepackage{multicol}
\usepackage{wrapfig}

\usepackage{algorithm}
\usepackage{algpseudocode}

\usepackage{tikz}
\usetikzlibrary{arrows}
\usetikzlibrary{calc}
\usepackage{pgfplots}
\pgfplotsset{compat=newest}
\pgfplotsset{plot coordinates/math parser=false}
\newlength\figureheight
\newlength\figurewidth
\usepackage{dashrule}

\usepackage{subcaption}

\theoremstyle{plain}
\newtheorem{theorem}{Theorem}[section]
\theoremstyle{plain}
\newtheorem{proposition}[theorem]{Proposition}
\theoremstyle{plain}
\theoremstyle{definition}
\newtheorem{definition}[theorem]{Definition}
\theoremstyle{plain}
\newtheorem{lemma}[theorem]{Lemma}
\theoremstyle{plain}

\theoremstyle{plain}
\newtheorem{remark}[theorem]{Remark}
\newtheorem{assumption}[theorem]{Assumption}
\theoremstyle{definition}


\newcommand{\RR}{\mathbb{R}}


\newcommand{\mSetSymMat}[1]{S^{#1}}
\newcommand{\mSetPosSemSymMat}[1]{S^{#1}_+}
\newcommand{\mSetPosSymMat}[1]{S^{#1}_{++}}
\newcommand{\mIntInt}[2]{\mathcal I_{[#1, #2]}}
\newcommand{\mIntGeq}[1]{\mathcal I_{\geq #1}}
\newcommand{\mNorm}[2]{\left\lVert {#1} \right\rVert_{#2}}
\newcommand{\mNormGen}[1]{\left\lVert {#1} \right\rVert}
\newcommand{\mDefFunction}[3]{#1: #2 \rightarrow #3}

\newcommand{\XX}{\mathcal{X}}
\newcommand{\UU}{\mathcal{U}}
\newcommand{\mSafeSet}{\mathcal{S}}

\newcommand{\mPRS}{\mathcal R}

\newcommand{\mAddDisturbance}{\mathcal W}
\newcommand{\mConvexHull}[1]{\mathrm{co}(#1)}

\newcommand{\mDef}{\coloneqq}
\newcommand{\mDistribution}[1]{\mathcal Q_{#1}}
\newcommand{\mConfidenceSet}[2]{\mathcal E_{#2}(#1)}
\newcommand{\mExpectation}[1]{\mathbb E(#1)}
\newcommand{\mVariance}[1]{\mathrm{var}(#1)}
\newcommand{\mFeature}{\phi}

\newcommand{\mData}{\mathcal D}
\newcommand{\mVec}{\mathrm {vec}}

\newcommand{\mTube}{\mathcal R}
\newcommand{\mTubeControl}{K_{\mathcal R}}
\newcommand{\mZpred}{{z}}
\newcommand{\mZpredOpt}{{z}^*}
\newcommand{\mVpred}{{v}}
\newcommand{\mVpredOpt}{{v}^*}

\newcommand{\mUpred}{u}
\newcommand{\mUpredOpt}{u^*}
\newcommand{\mXpred}{x}
\newcommand{\mXpredOpt}{x^*}


\newcommand{\NN}{\mathcal {N}}

\newcommand{\ZZ}{\mathcal{Z}}
\newcommand{\LL}{\mathcal{L}}

\newcommand{\mCol}[1]{\mathrm {col}_#1}
\newcommand{\mRow}[1]{\mathrm {row}_#1}

\newcommand*{\END}{\hfill\ensuremath{\lhd}}


\makeatletter
\newif\ifmygrid@coordinates
\tikzset{/mygrid/step line/.style={line width=0.80pt,draw=gray!80},
    /mygrid/steplet line/.style={line width=0.25pt,draw=gray!80}}
\pgfkeys{/mygrid/.cd,
    step/.store in=\mygrid@step,
    steplet/.store in=\mygrid@steplet,
    coordinates/.is if=mygrid@coordinates}
\def\mygrid@def@coordinates(#1,#2)(#3,#4){%
    \def\mygrid@xlo{#1}%
    \def\mygrid@xhi{#3}%
    \def\mygrid@ylo{#2}%
    \def\mygrid@yhi{#4}%
}
\newcommand\DrawGrid[3][]{%
    \pgfkeys{/mygrid/.cd,coordinates=true,step=1,steplet=0.2,#1}%
    \draw[/mygrid/steplet line] #2 grid[step=\mygrid@steplet] #3;
    \draw[/mygrid/step line] #2 grid[step=\mygrid@step] #3;
    \mygrid@def@coordinates#2#3%
    \ifmygrid@coordinates%
        \draw[/mygrid/step line]
        \foreach \xpos in {\mygrid@xlo,...,\mygrid@xhi} {%
                (\xpos,\mygrid@ylo) -- ++(0,-3pt)
                node[anchor=north] {$\xpos$}
            }
        \foreach \ypos in {\mygrid@ylo,...,\mygrid@yhi} {%
                (\mygrid@xlo,\ypos) -- ++(-3pt,0)
                node[anchor=east] {$\ypos$}
            };
    \fi%
}
\makeatother

\newcounter{steps}

\newcommand{\step}[1][\empty]
{\stepcounter{steps}%
 \par
 \hangindent=4em
 \hangafter=1
 \makebox[4em][l]{\textit{Step \arabic{steps}:}}%
 \ifx#1\empty\else #1 --\fi
}
\newenvironment{labelledsteps}%
{\setcounter{steps}{0}%
 \let\item=\step
 \parskip=0.25\baselineskip
 \parindent=0pt}%
{\par}

\begin{document}

\title{Probabilistic model predictive safety certification for learning-based control}

\author{Kim P.~Wabersich,
        Lukas~Hewing,
        Andrea~Carron, 
        Melanie~N.~Zeilinger 
\thanks{The authors are members of the Institute for Dynamic Systems and Control, ETH Z\"urich, Z\"urich
CH-8092, Switzerland (e-mail: [wkim|lhewing|carrona|mzeilinger]@ethz.ch)}
\thanks{This work was supported by the Swiss National Science Foundation under grant no. PP00P2 157601/1. Andrea Carron's research was supported by the Swiss National Centre of Competence
in Research NCCR Digital Fabrication (Agreement \#51NF40\_141853).}}

%
%

\markboth{}%
{Wabersich \MakeLowercase{\textit{et al.}}}

%



\maketitle

\begin{abstract}
Reinforcement learning (RL) methods have demonstrated their efficiency in simulation. However, many of the applications for which RL offers great potential, such as autonomous driving, are also safety critical and require a certified closed-loop behavior in order to meet safety specifications in the presence of physical constraints. This paper introduces a concept called probabilistic model predictive safety certification (PMPSC), which can be combined with any RL algorithm and provides provable safety certificates in terms of state and input chance constraints for potentially large-scale systems. The certificate is realized through a stochastic tube that safely connects the current system state with a terminal set of states that is known to be safe. A novel formulation allows a recursively feasible real-time computation of such probabilistic tubes, despite the presence of possibly unbounded disturbances. A design procedure for PMPSC relying on Bayesian inference and recent advances in probabilistic set invariance is presented. Using a numerical car simulation, the method and its design procedure are illustrated by enhancing an RL algorithm with safety certificates.
\end{abstract}

\begin{IEEEkeywords}
    Reinforcement learning (RL), Stochastic systems, Predictive control, Safety
\end{IEEEkeywords}

\IEEEpeerreviewmaketitle

\section{Introduction}
\IEEEPARstart{W}{hile} the field of reinforcement learning demonstrated various classes
of learning-based control methods in research-driven applications~\cite{mnih2015atari,merel2017humanBehavior},
very few results have been successfully transferred to industrial applications that are
\textit{safety-critical},
i.e.\ applications that are subject to physical and safety constraints.
In industrial applications, successful control methods are often
of simple structure, such as the Proportional–Integral–Derivative (PID) controller
\cite{ang2005pid} or linear state feedback controller~\cite{skogestad2007multivariable},
which require an expert to cautiously tune them manually.
Manual tuning is generally time consuming and therefore expensive, especially in the presence
of safety specifications. Modern control methods, such as model predictive control (MPC),
tackle this problem by providing safety guarantees with respect to adequate system and disturbance models by
design, reducing manual tuning requirements. The various successful applications of MPC
to safety critical systems reflect these capabilities, see e.g.~\cite{qin2000overview,lee2011model} for
an overview.

While provable safety of control methods facilitates the overall design procedure, the
tuning of various parameters, such~as the cost function, in order to achieve
a desired closed-loop behavior, still needs to be done manually and often requires
significant experience. In contrast, RL methods using trial-and-error procedures are often more
intuitive to design and are capable of iteratively computing an improved policy.
The downside of many RL algorithms, however, is that explicit consideration of physical system
limitations and safety requirements at each time step cannot be addressed, often due to
the complicated inner workings, and this limits their applicability in many industrial
applications~\cite{amodei2016concrete}.

This paper aims to address this problem by introducing a probabilistic model predictive safety
certification (PMPSC) scheme for learning-based controllers, which can equip any controller with
probabilistic constraint satisfaction guarantees. The scheme is motivated by the following
observation. Often, an MPC controller with a short prediction horizon is sufficient in order to provide safety
for a system during a closed-loop operation, even though the same horizon would not be enough
to achieve a desired performance. For example, in the case of
autonomous driving, checking if it is possible to transition the car into a safe set of states
(e.g.\ brake down to low velocity) can be done efficiently by solving an open loop optimal
control problem with a relatively small planning horizon (e.g.\ using maximum deceleration).
At the same time, a much longer planning horizon for an MPC controller, or even another
class of control policies, would be required in order to provide a comfortable
and foresightful driving experience.

This motivates the combination of ideas from MPC with RL methods in order to achieve
a safe and high performance closed-loop system operation requiring a small
amount of manual tuning. More precisely, a learning-based input action is
certified as safe if it leads to a safe state, i.e., a state for which a
potentially low-performance, but online computable and safe backup controller
exists for all future times.
By repeatedly computing such a backup controller for the state predicted one step ahead
after application of the learning input, it is either certified
as safe and is applied, or it is overwritten by the previous safe backup controller.
The resulting concept can be seen as a safety filter
that only filters proposed learning signals for which we cannot
guarantee constraint satisfaction in the future.

\textit{Contributions:}
We provide a safety certification framework which allows for enhanced
arbitrary learning-based control methods with safety guarantees\footnote{Inputs
provided by a human can be similarly enhanced by the safety certification scheme,
which relates e.g.\ to the concept of electronic stabilization control from
automotive engineering.} and which is suitable for possibly large-scale
systems with continuous and chance constrained input and state spaces. In order to
enable efficient implementation and scalability,
we provide an online algorithm together with a data-driven synthesis method
to compute backup solutions that can be realized by real-time capable
and established model predictive control (MPC) solvers,
e.g.~\cite{wang2010fastMpc,domahidi2012efficientInteriorPoint,houska2011acado}.
Compared to previously presented safety frameworks for learning-based control, e.g.
\cite{fisac2017generalSafetyFramework}, the set of safe state and action
pairs is implicitly represented through an online optimization problem, enabling us
to circumvent its explicit offline computation, which generally suffers from
the curse of dimensionality.

Unlike related concepts, such as those presented in
\cite{wabersich2018linear,Gurriet2018,Mannucci2018,Bastani2019},
we consider possibly nonlinear stochastic systems that can be represented
as linear systems with bounded model uncertainties and possibly unbounded
additive noise. For this class of systems, we present an automated, parametrization
free and data-driven design procedure that is tailored to the context of learning
the system dynamics. Due to our specific formulation, we can maintain recursive
feasibility of the underlying optimization problem, which is a distinctive difference
to related stochastic MPC and safety filter schemes. Using the example of safely
learning to track a trajectory with a car, we show how to construct a safe
reinforcement learning algorithm using our framework in combination with a basic
policy search algorithm.

\section{Related Work}
Driven by rapid progress in reinforcement learning there is also a
growing awareness regarding safety aspects of machine learning systems
\cite{amodei2016concrete}, see e.g.~\cite{garcia2015aComprehensiveSurveySafeReinforcementLearning}
for a comprehensive overview. As opposed to most methods
developed in the context of safe RL, the approach presented in this paper
keeps the system safe at all times, including exploration, and considers
continuous state and action spaces. This is possible through the use of
models and corresponding uncertainty estimates of the system, which can
be sequentially improved by, e.g., an RL algorithm to allow greater exploration.

In model-free safe reinforcement learning methods, policy search algorithms
have been proposed, e.g.~\cite{achiam2017constrained}, which provide expected safety
guarantees by solving a constrained policy optimization using
a modified trust-region policy gradient method~\cite{schulman2015trust}.
Efficient policy tuning with respect to best worst-case performance (also worst-case
stability under physical constraints) can be achieved using Bayesian min-max
optimization, see e.g.~\cite{wabersich2015automatic}, or by
safety-constrained Bayesian optimization as e.g.\ in
\cite{berkenkamp2015SafeRobustLearning,schreiter2015safe}.
These techniques share the limitation that they need to be tailored to a task-specific
class of policies. Furthermore, most techniques require repeated
execution of experiments, which prohibits fully autonomous safe learning in
`closed-loop'.

In~\cite{berkenkamp2016safe}, a method was developed that
allows for the analysis of a given closed-loop system (under an arbitrary RL policy)
with respect to safety, based on a probabilistic system model. An extension of this method is
presented in~\cite{berkenmap2017safe}, where the problem of updating the policy is
investigated and practical implementation techniques are provided.
The techniques require an a-priori known Lyapunov function and Lipschitz
continuity of the \textit{closed-loop} learning system. In the context of model-based safe
reinforcement learning, several learning-based model predictive control approaches are available.
The method proposed in~\cite{aswani2013safe} conceptually provides deterministic guarantees on
robustness, while statistical identification tools are used to identify the system in order to improve performance.
In \cite{bouffard2012learning}, the scheme mentioned has been tested and validated onboard
using a quadcopter. In \cite{Ostafew2016robust}, a robust constrained learning-based model predictive
control algorithm for path-tracking in off-road terrain is studied. The experimental evaluation shows that
the scheme is safe and conservative during initial trials, when model uncertainty is high and very
performant once the model uncertainty is reduced. Regarding safety, \cite{Koller2018} presents a
learning model predictive control method that provides theoretical guarantees in the case of Gaussian
process model estimates. For iterative tasks, \cite{Rosolia2017a} proposes a learning model predictive
control scheme that can be applied to linear system models with bounded disturbances. Instead of using
model predictive control techniques, PILCO~\cite{Deisenroth2011} allows the calculation of analytic policy gradients
and achieves good data efficiency, based on non-parametric Gaussian process regression.

The previously discussed literature provides specific reinforcement learning algorithms that are tied
to a specific, mostly model predictive control based, policy. In contrast, the proposed concept
uses MPC-based ideas in order to establish safety independently of a specific reinforcement
learning policy. This offers the opportunity to apply RL for learning more complex tasks than
for example steady-state stabilization, which is usually considered in model predictive control.
Many reinforcement learning algorithms are able to maximize rewards from
a black-box function, i.e.\ rewards that are only available through measurements, which would
not be possible using a model predictive controller, where the cost enters the corresponding
online optimization problem explicitly.

Closely related to the approach proposed in this paper, the concept of
a safety framework for learning-based control emerged from robust reachability analysis,
robust invariance, as well as classical Lyapunov-based methods~\cite{Gillula2011,
fisac2017generalSafetyFramework,Wabersich2018c,larsen2017safeLearningDistributed}.
The concept consists of a safe set in the state space and a safety controller
as originally proposed in \cite{Wieland2007} for the case of perfectly known
system dynamics in the context of safety barrier functions.
While the system state is contained in the safe set, any feasible input
(including learning-based controllers) can be applied
to the system. However, if such an input would cause the system to leave the
safe set, the safety controller interferes. Since this strategy is compatible
with any learning-based control algorithm, it serves as a universal
safety certification concept. Previously proposed concepts are limited
to a robust treatment of the uncertainty in order to provide rigorous
safety guarantees. This potentially results in a conservative
system behavior, or even the ill-posedness of the overall safety requirement
e.g.\ in the case of frequently considered Gaussian distributed additive
system noise, which has unbounded support.

Compared to previous research using similar model predictive control-based
safety mechanisms such as \cite{wabersich2018linear,Gurriet2018,Mannucci2018,Bastani2019}, we
introduce a probabilistic formulation of the safe set and consider
safety in probability for all future times, allowing one to prescribe a
desired degree of conservatism and address disturbance distributions 
with unbounded support. The proposed method only requires an
implicit description of the safe set as opposed to an explicit
representation, which enables scalability with respect to the
state dimension, while being independent of a particular RL algorithm.

\section{Preliminaries and Problem Statement}\label{sec:preliminaries_and_problem}

\subsection{Notation}
The set of symmetric matrices of dimension $n$
is denoted by $\mSetSymMat{n}$, the set of positive (semi-)
definite matrices by ($\mSetPosSemSymMat{n}$) $\mSetPosSymMat{n}$,
the set of integers in the interval $[a,b]\subset\RR$ by
$\mIntInt{a}{b}$, and the set of integers in the interval
$[a,\infty)\subset\RR$ by $\mIntGeq{a}$.
The Minkowski sum of two sets $\mathcal A_1, \mathcal A_2 \subset \RR^n$
is denoted by
$\mathcal A_1 \oplus \mathcal A_2 \mDef \{ a_1 + a_2 | a_1 \in \mathcal A_1, a_2 \in \mathcal A_2 \}$
and the Pontryagin set difference by
$\mathcal A_1 \ominus \mathcal A_2 \mDef \{ a_1 \in \RR^n | a_1 + a_2 \in \mathcal A_2,
~ \forall a_2 \in \mathcal A_2 \} $. An affine image of a set
$\mathcal A_1\subseteq \RR^n$ under $x\mapsto K x$ is defined as $K\mathcal A_1\mDef\{K x|x\in \mathcal A_1\}$.
The $i$-th row and $i$-th column of
a matrix $A\in\RR^{n\times m}$ are denoted by $\mRow{i}(A)$
and $\mCol{i}(A)$, respectively. The expression $x\sim \mDistribution{x}$ means that a random variable $x$
is distributed according to the distribution $\mDistribution{x}$, and
$\NN(\mu,\Sigma)$ is a multivariate Gaussian distribution with mean $\mu\in\RR^n$
and covariance $\Sigma \in \mSetPosSemSymMat{n}$. The probability
of an event $E$ is denoted by $\Pr(E)$. For a random variable $x$, $\mExpectation{x}$
and $\mVariance{x}$  denote the expected value and the variance.

\subsection{Problem statement}
We consider a-priori unknown nonlinear, time-invariant discrete-time dynamical
systems of the form
\begin{align}\label{eq:general_nonlinear_system}
  x(k\!+\!1) = f_{\theta}(x(k), u(k)) +  w_s(k), ~ \forall k\in\mIntGeq{0}
\end{align}
subject to polytopic state and input constraints $x(k) \in \XX$, $u(k) \in \UU$,
and i.i.d.\ stochastic disturbances $w_s(k) \sim \mDistribution{w_s}$. 
The uncertainties in the function $f_\theta$ are characterized by the random 
parameter vector $\theta \sim \mDistribution{\theta}$.
For controller design, we consider an approximate model description
of the following form
\begin{align}\label{eq:general_linear_additive_description}
    x(k+1) = Ax(k) + Bu(k) + w_{\theta}(x(k),u(k)) + w_s(k),
\end{align}
where $A\in\RR^{n\times n}$, $B\in\RR^{n\times m}$ are typically obtained
from linear system identification techniques, see e.g.\ \cite{ljung1998system},
and $w_{\theta}(x(k),u(k))$ accounts for model errors.

In order to provide safety certificates, we require that the model error $w_{\theta}(x(k),u(k))$
is contained with a certain probability in a model error set $\mAddDisturbance_{\theta}\subset\RR^n$, 
where $\mAddDisturbance_{\theta}$ is chosen based on available data
$\mData = \{(x_i,u_i,f_{\theta}(x_i,u_i) + w_{s,i})\}_{i=1}^{N_\mData}$,
where $N_\mData$ denotes the number of available data points.
\begin{assumption}[Bounded model error]\label{ass:bound_on_nonlinearities}
    The deviation between the true system~\eqref{eq:general_nonlinear_system}
    and the corresponding model~\eqref{eq:general_linear_additive_description} is
    bounded, i.e.
    \begin{align}
        \Pr
        \begin{pmatrix}
            w_{\theta}(x(k),u(k)) \in \mAddDisturbance_{\theta} \\
            \forall k\in\mIntGeq{0},x(k)\in\RR^n,u(k)\in\RR^m
        \end{pmatrix}
        \geq p_{\theta}
        \label{eq:bound_on_nonlinearities}
    \end{align}
    where $p_{\theta}>0$ denotes the probability level 
    and the probability is taken with respect to\ the random parameter $\theta$. \END
\end{assumption}
A principled way to infer a system model of the form~\eqref{eq:general_linear_additive_description} for linear systems from available data 
such that Assumption~\ref{ass:bound_on_nonlinearities} is satisfied for a compact domain is discussed in Section \eqref{sec:data_driven_design}.
For nonlinear systems, the computation of $\mAddDisturbance_\theta$ typically involves the solution 
of a non-convex optimization problem, which can be approximated e.g.\ by gridding, similarly as proposed
in \cite[Remark IV.1]{Wabersich2018c}.
\begin{remark}[Gaussian processes]\label{rem:gaussian process}
    Instead of parametric uncertainty, one could also use
	non-parametric Gaussian process regression for the dynamics function $f_\theta$.
    The model error set $\mAddDisturbance_{\theta}$ can then be derived
    by assuming that $f_\theta$ has a bounded norm in a reproducing kernel Hilbert space using the bound presented in \cite[Theorem~2]{Chowdhury2017}. For function samples of a Gaussian process on compact domains, one can similarly apply, e.g., the bound presented in~\cite{Lederer2019GPbound}.
	\END
\end{remark}

In this paper, system safety is defined as a required degree of constraint
satisfaction in the form of probabilistic state and input constraints, i.e.,
as chance-constraints of the form
\begin{align}\label{eq:chance_constraints}
	\Pr(x(k)\in\XX) \geq p_x\, , \ \Pr(u(k)\in\UU) \geq p_u\, ,
\end{align}
for all $k\in\mIntGeq{0}$ with probabilities $p_x,p_u \geq 0$, where the probability 
is with respect to\ all uncertain elements, i.e.\ parameters $\theta$ and noise realizations $\{w_s(i)\}_{i=0}^{k-1}$.

The overall goal is to certify safety of arbitrary control signals $u_\LL(k)\in\RR^m$,
e.g, provided by an RL algorithm. This is achieved by means of a
\emph{safety policy}, which is computed in real time based on the current system
state $x(k)$ and the proposed input $u_\LL(k)$.
A safety policy consists of a safe input $u_\mSafeSet(k)$ at time $k$
and a safe backup trajectory that guarantees safety with respect
to the constraints~\eqref{eq:chance_constraints} when applied in future time
instances. The safety policy is updated at every time step, such that the
first input equals $u_\LL(k)$ if that is safe and otherwise implements a
minimal safe modification. More formally:
\begin{definition}\label{def:safety_certification}
    Consider any given control signal $u_\mathcal{L}(k)\in\RR^m$ for
    time steps $k\in\mIntGeq{0}$. We call a control input $u_\LL(\bar k)$~
    \emph{certified as safe} for system~\eqref{eq:general_nonlinear_system} at
    time step $\bar{k}$ and state $x(\bar{k})$ with respect to a \emph{safety policy}
    $\mDefFunction{\pi_\mSafeSet}{\RR^n\times\RR^m}{\RR^m}$,
    if $\pi_\mSafeSet(x(\bar k), u_\LL(\bar k)) = u_\LL(\bar k)$ and $u(k) = \pi_\mSafeSet(x(k), u_\LL(k))$
    keeps the system safe, i.e. \eqref{eq:chance_constraints}
    is satisfied for all $k\geq 0$.
    \END
\end{definition}

By assuming that a safety policy can be found for the initial system state,
Definition~\ref{def:safety_certification} implies the following safety algorithm. At every
time step, safety of a proposed input $u_\LL(k)$ is verified using the safety policy
according to Definition~\ref{def:safety_certification}. If safety cannot be verified,
the proposed input is modified and
$u(\bar{k}) = \pi_\mSafeSet(x(\bar{k}),u_\LL(\bar k))$ is applied to the
system instead, ensuring safety until the next state and learning input pair
can be certified as safe again. The set of initial states for which $\pi_\mSafeSet$
ensures safety can thus be interpreted as a \emph{safe set} of system states
and represents a probabilistic variant of the safe set definition in
\cite{wabersich2018linear}.

In the following, we present a method to compute a safety policy
$\pi_\mSafeSet$ for uncertain models of the form~\eqref{eq:general_linear_additive_description}
making use of model predictive control (MPC) techniques, which provide real-time
feasibility and scalability of the approach while aiming at a
large safe set implicitly defined by the safety policy.

\section{Probabilistic model predictive safety certification}
\begin{figure*}[ht]
\centering
\hfill
\begin{minipage}[b]{.4\textwidth}
      \begin{subequations}\label{eq:NMPSC_opt}
        \begin{align}
            \min_{\substack{\mUpred_{i|k},\mXpred_{i|k}}} ~ & \mNormGen{u_\LL(k) - \mUpred_{0|k}} \nonumber \\
            \text{s.t.} ~~ & \forall i\in\mIntInt{0}{N-1}: \nonumber \\
            & \mXpred_{i+1|k} = f(\mXpred_{i|k},\mUpred_{i|k}), \label{eq:NMPSC_opt_nominal_dynamics} \\
            & \mXpred_{i|k} \in \XX \\
            & \mUpred_{i|k} \in \UU \\
            & \mXpred_{N|k} \in \XX_f  \label{eq:NMPSC_opt_terminal}\\
            & \mXpred_{0|k} = x(k)  \label{eq:NMPSC_initial_nominal}
        \end{align}
      \end{subequations}
            \begin{tikzpicture}[scale = 1.05]
          \input{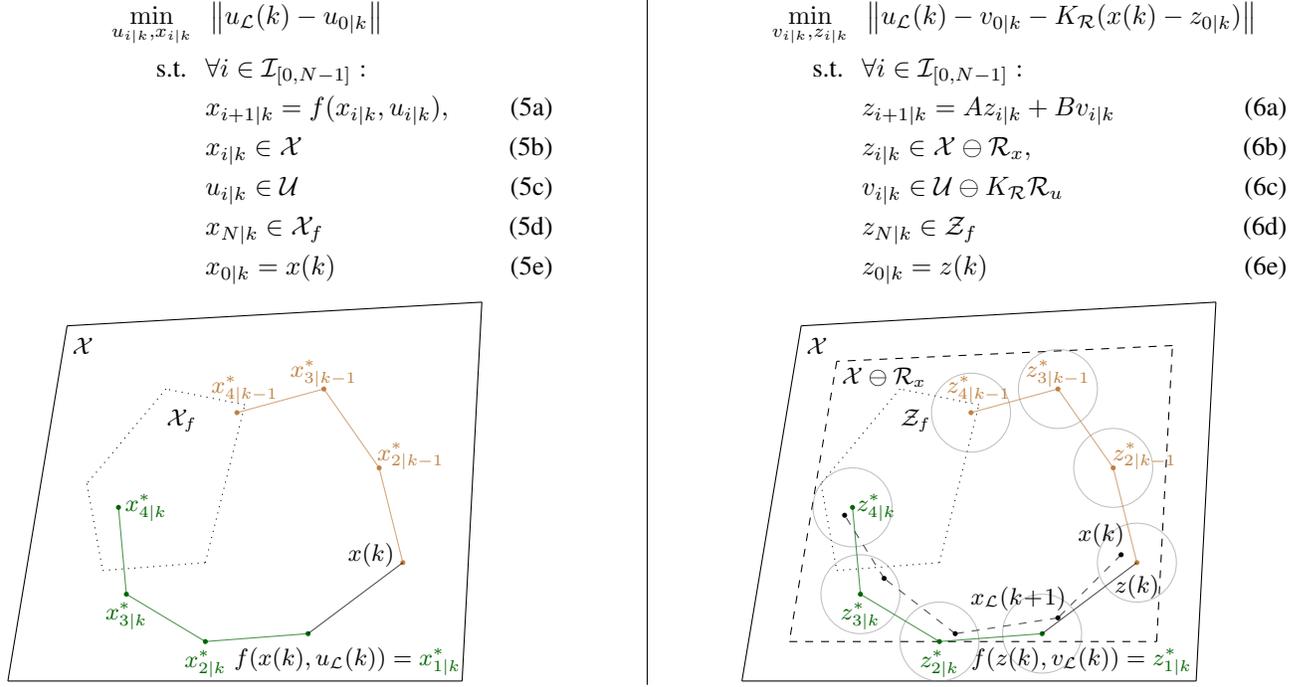}
      \end{tikzpicture}
\end{minipage}
\unskip\ \hfill \vrule\ \hfill
\begin{minipage}[b]{.4\textwidth}
      \begin{subequations}\label{eq:PMPSC_opt}
        \begin{align}
            \min_{\substack{\mVpred_{i|k},\mZpred_{i|k}}} ~ & \mNormGen{u_\LL(k) - \mVpred_{0|k} - \mTubeControl(x(k) - \mZpred_{0|k})} \nonumber\\
            \text{s.t.} ~~ & \forall i\in\mIntInt{0}{N-1}: \nonumber \\
            & \mZpred_{i+1|k} = A\mZpred_{i|k} + B\mVpred_{i|k} \label{eq:PMPSC_opt_nominal_dynamics} \\
            & \mZpred_{i|k} \in \XX \ominus \mTube_x, \label{eq:PMPSC_opt_tightening_x} \\
            & \mVpred_{i|k} \in \UU \ominus \mTubeControl\mTube_u \label{eq:PMPSC_opt_tightening_u}\\
            & \mZpred_{N|k} \in \ZZ_f  \label{eq:PMPSC_opt_terminal}\\
            & \mZpred_{0|k} = z(k)  \label{eq:PMPSC_initial_nominal}
        \end{align}
      \end{subequations}
      \begin{tikzpicture}[scale = 1.05]

\small


\coordinate (XX1) at (-3,-2.5);
\coordinate (XX1T) at (-2.4,-2.0);
\coordinate (XX2) at (2.75,-2.5);
\coordinate (XX2T) at (2.25, -2.0);
\coordinate (XX3) at (3.0,2.3);
\coordinate (XX3T) at (2.45,1.75);
\coordinate (XX4) at (-2.25,2.0);
\coordinate (XX4T) at (-1.8,1.55);
\draw (XX1) -- (XX2);
\draw (XX2) -- (XX3);
\draw (XX3) -- (XX4);
\draw (XX4) -- (XX1);
\node at (-2.05, 1.75) {$\XX$};

\draw[dashed] (XX1T) -- (XX2T);
\draw[dashed] (XX2T) -- (XX3T);
\draw[dashed] (XX3T) -- (XX4T);
\draw[dashed] (XX4T) -- (XX1T);
\node at (-1.2, 1.35) {$\XX \ominus \mPRS_x$};

\coordinate (XXf_offset) at (1,0);
\coordinate (XXf1) at (-2,1.2);
\coordinate (XXf2) at (-3,0);
\coordinate (XXf3) at (-2.8,-1.1);
\coordinate (XXf4) at (-1.5, -1);
\coordinate (XXf5) at (-1, 1);
\draw[dotted] ($(XXf_offset) + (XXf1)$) -- ($(XXf_offset) + (XXf2)$);
\draw[dotted] ($(XXf_offset) + (XXf2)$) -- ($(XXf_offset) + (XXf3)$);
\draw[dotted] ($(XXf_offset) + (XXf3)$) -- ($(XXf_offset) + (XXf4)$);
\draw[dotted] ($(XXf_offset) + (XXf4)$) -- ($(XXf_offset) + (XXf5)$);
\draw[dotted] ($(XXf_offset) + (XXf5)$) -- ($(XXf_offset) + (XXf1)$);
\node at ($(-1.8, 0.8) + (XXf_offset)$) {$\ZZ_f$};

\coordinate (z1) at (2, -1);
\coordinate (z2) at (1.7,0.2);
\coordinate (z3) at (1, 1.2);
\coordinate (z4) at (-0.1, 0.9);
\coordinate (off) at (0, 0);

\draw[fill=brown, draw = brown] (z1) ellipse (0.025 and 0.025);
\draw[thin, draw = gray, draw opacity = 0.5] (z1) ellipse (0.5 and 0.5);
\node at ($(z1) + (0,-0.3)$) {$z(k)$};
\draw[fill=brown, draw = brown] (z2) ellipse (0.025 and 0.025);
\draw[thin, draw = gray, draw opacity = 0.5] (z2) ellipse (0.5 and 0.5);
\node[text = brown] at ($(z2) + (0.4,0.15)$) {$\mZpredOpt_{2|k-1}$};
\draw[fill=brown, draw = brown] (z3) ellipse (0.025 and 0.025);
\draw[thin, draw = gray, draw opacity = 0.5] (z3) ellipse (0.5 and 0.5);
\node[text = brown] at ($(z3) + (0,0.2)$) {$\mZpredOpt_{3|k-1}$};
\draw[fill=brown, draw = brown] (z4) ellipse (0.025 and 0.025);
\draw[thin, draw = gray, draw opacity = 0.5] (z4) ellipse (0.5 and 0.5);
\node[text = brown] at ($(z4) + (0.1,0.3)$) {$\mZpredOpt_{4|k-1}$};

\draw[thin,opacity=0.7, color = brown] (z1) -- (z2);
\draw[thin,opacity=0.7, color = brown] (z2) -- (z3);
\draw[thin,opacity=0.7, color = brown] (z3) -- (z4);

\coordinate (zz1) at (0.8, -1.9);
\coordinate (zz2) at (-0.5,-2);
\coordinate (zz3) at (-1.5,-1.4);
\coordinate (zz4) at (-1.6, -0.3);

\draw[fill=black!60!green, draw = black!60!green] (zz1) ellipse (0.025 and 0.025);
\draw[thin, draw = gray, draw opacity = 0.5] (zz1) ellipse (0.5 and 0.5);
\node at ($(zz1) + (0.5,-0.35)$) {$f(z(k),v_\LL(k)) = \textcolor{black!60!green}{\mZpredOpt_{1|k}}$};
\draw[fill=black!60!green, draw = black!60!green] (zz2) ellipse (0.025 and 0.025);
\draw[thin, draw = gray, draw opacity = 0.5] (zz2) ellipse (0.5 and 0.5);
\node[text = black!60!green] at ($(zz2) + (0.0,-0.275)$) {$\mZpredOpt_{2|k}$};
\draw[fill=black!60!green, draw = black!60!green] (zz3) ellipse (0.025 and 0.025);
\draw[thin, draw = gray, draw opacity = 0.5] (zz3) ellipse (0.5 and 0.5);
\node[text = black!60!green] at ($(zz3) + (0.0,-0.275)$) {$\mZpredOpt_{3|k}$};
\draw[fill=black!60!green, draw = black!60!green] (zz4) ellipse (0.025 and 0.025);
\draw[thin, draw = gray, draw opacity = 0.5] (zz4) ellipse (0.5 and 0.5);
\node[text = black!60!green] at ($(zz4) + (0.3,0)$) {$\mZpredOpt_{4|k}$};

\draw[thin,opacity=0.7, color = black] (z1) -- (zz1);
\draw[thin,opacity=0.7, color = black!60!green] (zz1) -- (zz2);
\draw[thin,opacity=0.7, color = black!60!green] (zz2) -- (zz3);
\draw[thin,opacity=0.7, color = black!60!green] (zz3) -- (zz4);

\coordinate (xk) at (1.8, -0.9);
\draw[fill=black] (xk) ellipse (0.025 and 0.025);
\node at ($(xk) + (off) + (-0.25,0.25)$) {$x(k)$};
\coordinate (x1) at (1, -1.7);
\coordinate (x2) at (-0.3, -1.9);
\coordinate (x3) at (-1.2, -1.2);
\coordinate (x4) at (-1.7, -0.4);
\draw[fill=black] (x1) ellipse (0.025 and 0.025);
\node at ($(x1) + (off) + (-0.5,0.25)$) {$x_\LL(k\!+\!1)$};
\draw[fill=black] (x2) ellipse (0.025 and 0.025);
\draw[fill=black] (x3) ellipse (0.025 and 0.025);
\draw[fill=black] (x4) ellipse (0.025 and 0.025);

\draw[thin,opacity=0.7, dashed] (xk) -- (x1);
\draw[thin,opacity=0.7, dashed] (x1) -- (x2);
\draw[thin,opacity=0.7, dashed] (x2) -- (x3);
\draw[thin,opacity=0.7, dashed] (x3) -- (x4);
      \end{tikzpicture}
    \end{minipage}\hfill
    \caption{\small Mechanism in order to construct a safety policy `on-the-fly':
        The system is depicted at time $k$, with the current backup solution in brown.
        A proposed learning input $u_\LL$ is certified by constructing a safe solution
        for the following time step, shown in green.
        The existence of a safe trajectory is ensured by extending the
        brown trajectory using Assumptions~\ref{ass:nominal_terminal_invariant_set}
        and~\ref{ass:tightened_terminal_invariant_set} respectively.
        \textbf{Left (NMPSC):} Safe solutions are computed with respect to\ the
        true state dynamics, and constraints $x \in \XX$ are guaranteed
        to be satisfied.
        \textbf{Right (PMPSC):} Safe solutions are computed with respect to\ the
        nominal state $z$. The true state lies within the tube
        around the nominal state with probability $p_x$.
        By enforcing $z \in \XX \ominus \mTube_x$ constraint satisfaction holds with at
        least the same probability.
        }
    \label{fig:model_predictive_safety}    
\end{figure*}
The fundamental idea of \emph{model predictive safety certification},
which was introduced for linear deterministic systems
in~\cite{wabersich2018linear}, is the on-the-fly computation
of a safety policy $\pi_\mSafeSet$ that ensures
constraint satisfaction at all times in the future.
The safety policy is specified using MPC methods,
i.e., an input sequence is computed that safely steers the system to
a terminal safe set $\XX_f$, which can be done efficiently in
real-time. The first input is selected
as the learning input if possible, in which case it is certified as safe,
or selected as `close' as possible to the learning input otherwise. A specific
choice of the terminal safe set $\XX_f$ allows us to show that a previous
solution at time $k-1$ implies the existence of a feasible solution at time $k$,
ensuring safety for all future times.
Such terminal sets $\XX_f$ can, e.g., be a neighborhood of a locally stable
steady-state of the system~\eqref{eq:general_nonlinear_system}, or a possibly conservative
set of states for which a safe controller is known.

\subsection{Nominal model predictive safety certification scheme}\label{subsec:NMPSC}
In order to introduce the basic idea of the presented approach, we
introduce a \emph{nominal model predictive safety certification (NMPSC)} scheme
under the simplifying assumption that the system dynamics~\eqref{eq:general_nonlinear_system}
are perfectly known, time-independent, and without noise, i.e.\
$x(k+1) = f(x(k), u(k))~\forall k\in\mIntGeq{0}$.
The mechanism to construct the safety policy for certifying a given control input
is based on concepts from model predictive control
\cite{rawlings2009model} as illustrated in Figure~\ref{fig:model_predictive_safety}~(left)
with the difference that we aim at certifying an external learning signal $u_\LL$
instead of performing, e.g., safe steady-state or trajectory tracking. 

The safety policy $\pi_\mSafeSet$ is defined implicitly through
optimization problem~\eqref{eq:NMPSC_opt}, unifying certification and computation of
the safety policy. Thereby, problem~\eqref{eq:NMPSC_opt} does
not only describe the computation of a safety policy based on the current state $x(k)$
and according to Definition~\eqref{def:safety_certification}, but also provides a
mechanism in order to modify the learning-based control input $u_\LL(k)$ as little as necessary
in order to find a safe backup policy for the predicted state at time $k+1$.

In \eqref{eq:NMPSC_opt}, $x_{i|k}$ is the state predicted $i$ time steps ahead,
computed at time $k$, i.e.\ $x_{0|k} = x(k)$. Problem~\eqref{eq:NMPSC_opt} computes an
$N$-step input sequence $\{\mUpredOpt_{i|k}\}$ satisfying the input constraints
$\UU$, such that the predicted states satisfy the constraints $\XX$ and reach
the terminal safe set $\XX_f$ after $N$ steps, where $N \in \mIntGeq{1}$ is the
prediction horizon. The safety controller $\pi_\mSafeSet$ is defined as the first
input $\mUpredOpt_{0|k}$ of the computed optimal input sequence driving the system to 
the terminal set, which guarantees safety for all future times via an invariance property.

\begin{assumption}[Nominal invariant terminal set]\label{ass:nominal_terminal_invariant_set}
  There exists a nominal terminal invariant set $\XX_f\subseteq\XX$
  and a corresponding control law $\mDefFunction{\kappa_f}{\XX_f}{\UU}$, such that
  for all $x\in\XX_f$ it holds that $\kappa_f(x) \in \UU$ and
  $f(x,\kappa_f(x)) \in \XX_f$.
  \END
\end{assumption}

Assumption~\ref{ass:nominal_terminal_invariant_set} provides recursive feasibility of
optimization problem~\eqref{eq:NMPSC_opt} and therefore infinite-time constraint
satisfaction, i.e., if a feasible solution at time $k$ exists, one exists at
$k\!+\!1$ and therefore at all future times, see i.e.\ \cite{rawlings2009model}.

The safety certification scheme then works as follows. Consider a measured system state
$x(k-1)$, for which~\eqref{eq:NMPSC_opt} is feasible and the input trajectory
$\{\mUpredOpt_{i|k-1}\}$ is computed. After applying the first input to the system
$u(k-1)=\mUpredOpt_{0|k-1}$, the resulting state $x(k)$ is measured again. Because it
holds in the nominal case that $x(k) = \mXpredOpt_{1|k-1}$, a valid input sequence
$\{\mUpredOpt_{1|k-1},\ldots,\mUpredOpt_{N-1|k-1}, \kappa_f(\mXpredOpt_{N|k-1})\}$ is known
from the previous time step, which satisfies constraints and steers the state to
the safe terminal set $\XX_f$, as indicated by the brown trajectory in
Figure~\ref{fig:model_predictive_safety}~(left).
The safety of a proposed learning input $u_\LL$ is certified by solving optimization
problem~\eqref{eq:NMPSC_opt}, which if feasible for $\mUpred_{0|k} = u_\LL$,
provides the green trajectory in Figure~\ref{fig:model_predictive_safety}~(left)
such that $u_\LL$ can be safely applied to the system.
Should problem~\eqref{eq:NMPSC_opt} not be feasible for $\mUpred_{0|k} = u_\LL$,
it returns an alternative input sequence that safely guides the system towards
the safe set $\XX_f$. The first element of this sequence $\mUpredOpt_{0|k}$ is chosen to be as close
as possible to $u_\LL$ and is applied to system~\eqref{eq:general_nonlinear_system} instead
of $u_\LL$. Due to recursive feasibility, i.e., knowledge of the brown trajectory in
Figure~\ref{fig:model_predictive_safety} (left), such a
solution always exists, ensuring safety.

In the context of learning-based control, the true system dynamics
are rarely known accurately. In order to derive a probabilistic version
of the NMPSC scheme that accounts for uncertainty in the system
model~\eqref{eq:general_linear_additive_description} in the following,
we leverage advances in probabilistic stochastic model predictive control~\cite{hewing2018stochastic},
based on so-called probabilistic reachable sets.

\subsection{Probabilistic model predictive safety certification scheme}\label{subsec:PMPSC}
In the case of uncertain system dynamics, the safety policy consists of two components
following a tube-based MPC concept \cite{rawlings2009model}.
The first component considers a nominal state of the system $z(k)$ driven by
linear dynamics, and computes a nominal safe trajectory $\{\mZpredOpt_{i|k},
\mVpredOpt_{i|k}\}$ through optimization problem~\eqref{eq:PMPSC_opt}, which is similar to the case of perfectly known dynamics
introduced in the previous section, defining the nominal input $v(k) = v^*_{0|k}$. The second component consists of an auxiliary controller,
which acts on the deviation $e(k)$ of the true system state from the nominal one
and ensures that the true state $x(k)$ remains close to the nominal trajectory.
Specifically, it guarantees that $e(k)$ lies within a set $\mTube$, often
called the `tube', with probability of at least $p_x$ at each time step.
Together, the resulting safety policy is able to steer the system state $x(k)$ within
the probabilistic tube along the nominal trajectory towards the safe terminal set.

We first define the main components and assumptions, in order to then introduce
the probabilistic model predictive safety certification (PMPSC) problem together
with the proposed safety controller.
Define with $z(k)\in \RR^n$ and $v(k)\in\RR^m$ the nominal system
states and inputs, as well as the nominal dynamics according to
model \eqref{eq:general_linear_additive_description} as
\begin{align}\label{eq:nominal_system}
	z(k\!+\!1) = Az(k) + Bv(k), ~k\in\mIntGeq{0}
\end{align}
with the initial condition $z(0)=x(0)$. For example, one might choose matrices $(A,B)$
in the context of learning time-invariant linear systems based on
the maximum likelihood estimate of the true system dynamics.
Denote $e(k)\mDef x(k)-z(k)$ as the error (deviation) between the true system state,
evolving according to~\eqref{eq:general_nonlinear_system},
and the nominal system state following~\eqref{eq:nominal_system}.
The controller is then defined by augmenting
the nominal input with an auxiliary feedback on the error, in the case of a
linear system \eqref{eq:nominal_system} a linear
state feedback controller $\mTubeControl$
\begin{align}\label{eq:auxiliary_control}
	u(k) = v(k) + \mTubeControl(x(k)-z(k)),
\end{align}
which keeps the real system state $x(k)$ close to the nominal system state
$z(k)$, i.e.\ keeps the error $e(k)$ small, 
if $\mTubeControl\in\RR^{m\times n}$ is chosen such that it stabilizes
system \eqref{eq:nominal_system}. By Assumption~\ref{ass:bound_on_nonlinearities},
the model error $w_{\theta}( x(k), u(k))$ is contained in
$\mAddDisturbance_{\theta}$ for all time steps with probability $p_{\theta}$.
Therefore, we drop the state and input dependencies in the following and simply refer to $w_{\theta}(k)$
as the model mismatch at time $k$, such that the error dynamics can be expressed as
\begin{align}\label{eq:error_dynamics}
    e(k\!+\!1) &= x(k\!+\!1) - z(k\!+\!1) \nonumber \\
    &= f_{\theta}(x(k),u(k)) + w_s(k) - Az(k) - Bv(k) \nonumber \\
    &= f_{\theta}(x(k),u(k)) - Ax(k) - Bu(k) \nonumber \\ 
    &\quad + Ax(k) + Bu(k) + w_s(k) - Az(k) - Bv(k) \nonumber \\
    &= (A + BK_\mTube)e(k) + w_{\theta}(k) + w_s(k).
\end{align}

By setting the initial nominal state to the real state, i.e., $z(0) = x(0)\Rightarrow e(0) = 0$,
the goal is to keep the evolving error $e(k)$,
i.e.\ the deviation from the nominal reference trajectory, small in probability with levels
$p_x$ and $p_u$ for state and input constraints \eqref{eq:chance_constraints}, respectively.
This requirement can be formalized using the concept of probabilistic reachable
sets introduced in~\cite{Pola2006,Abate2008,hewing2018stochastic}.

\begin{definition}\label{def:PRS}
  A set $\mPRS$ is a probabilistic reachable set (PRS) at probability level $p$
  for system \eqref{eq:error_dynamics} if
  \begin{align}
    e(0) = 0 \Rightarrow \Pr(e(k)\in\mPRS)\geq p,
  \end{align}
  for all $k\in\mIntGeq{0}$.\END
\end{definition}
In Section \ref{subsec:model_design} we show how to compute PRS sets
$\mTube_x,\mTube_u$, corresponding to state and input chance 
constraints~\eqref{eq:chance_constraints}, in order to fulfill the following Assumption.
\begin{figure*}[t]
	\centering
	\vspace{0.45cm}
	\begin{tikzpicture}[scale = 1]
		\input{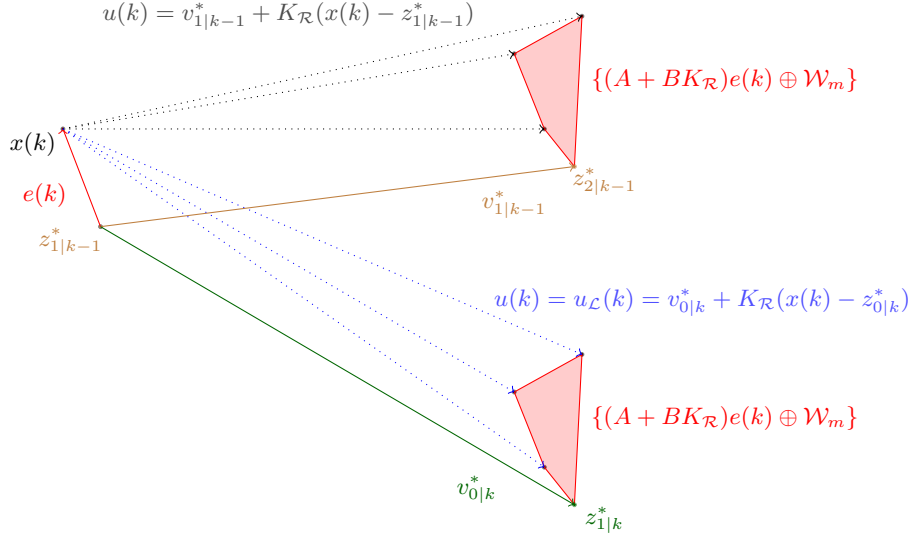}
	\end{tikzpicture}
	\caption{\small Illustration of the idea underlying the proof of Theorem~\ref{thm:pmpsc}
		without stochastic noise, i.e.\ $w_s(k)=0$, and $\mAddDisturbance_{\theta}$ polytopic.
        Starting from $x(k)$, the set of possible reachable states for $x(k+1)$ under the safety policy
        $u(k) = \mVpredOpt_{1|k-1} + K_\Omega(x(k) - \mZpredOpt_{1|k-1})$ from the previous
        time step is indicated by the three dotted black arrows. The corresponding predicted error set with respect
        to the nominal system is given by $\{(A+BK_\mTube)e(k) \oplus \mAddDisturbance_{\theta}\}$ as shown in red.
        Solving \eqref{eq:PMPSC_opt} yields the optimal input $u(k) = \mVpredOpt_{0|k} + K_\Omega(x(k) - \mZpredOpt_{0|k})$,
        which preserves the predicted error set, enabling us to probabilistically bound the error within the PRS
        $\mTube_x,\mTube_u$ from Assumption~\ref{ass:general_PRS_tube}.}
	\label{fig:proof_pmpsc}
\end{figure*}
\begin{assumption}[Probabilistic tube]\label{ass:general_PRS_tube}
  There exists a linear state feedback matrix $\mTubeControl\in\RR^{m\times n}$ that stabilizes
  system~\eqref{eq:nominal_system}. The corresponding PRS sets for the error dynamics~\eqref{eq:error_dynamics}
  with probability levels $p_x$ and $p_u$ are denoted by $\mTube_x,\mTube_u\subseteq\RR^n$.
  \END
\end{assumption}
Based on Assumption~\ref{ass:general_PRS_tube}, it is possible to define
deterministic constraints on the nominal system \eqref{eq:nominal_system} that capture the chance 
constraints~\eqref{eq:chance_constraints}, by choosing $\XX\ominus\mTube_x$ 
as tightened state constraints, as depicted in
Figure~\ref{fig:model_predictive_safety}~(right), in which the grey circles
illustrate the PRS centered around the predicted nominal state $z$, such that
they contain the true state $x$ with probability $p_x$.
An appropriate tightening of the input constraints is obtained by linearly transforming
the error set $\mTube_u$ at probability level $p_u$ using the linear error feedback
$\mTubeControl$, resulting in $\UU \ominus \mTubeControl\mTube_u$.
Through calculation
of the nominal safety policy towards a safe terminal set within the tightened
constraints, and by application of~\eqref{eq:auxiliary_control}, finite-time
chance-constraint satisfaction over the planning horizon $k+N$
follows directly by Definition~\ref{def:PRS} and Assumption~\ref{ass:general_PRS_tube}.
In order to provide `infinite' horizon safety through recursive feasibility of
\eqref{eq:PMPSC_opt}, we require a terminal invariant set for the nominal system
state $\ZZ_f$ similar to Assumption~\ref{ass:nominal_terminal_invariant_set},
which is contained in the tightened constraints.
\begin{assumption}[Nominal terminal set]\label{ass:tightened_terminal_invariant_set}
  There exists a terminal invariant set $\ZZ_f\subseteq \XX \ominus \mTube_x$
  and a corresponding control law $\mDefFunction{\kappa_f}{\ZZ_f}{\UU \ominus \mTubeControl \mTube_u}$ such that
  for all $z\in\ZZ_f$ it holds that $\kappa_f(z) \in \UU \ominus \mTubeControl \mTube_u$ and
  $Az + B \kappa_f(z) \in \ZZ_f$.
  \END
\end{assumption}
The generation of the terminal set $\ZZ_f$ based on collected measurement data is 
discussed in Section~\ref{subsec:design_Z_f}, allowing for a successive improvement
of the overall PMPSC performance.
In classical tube-based and related stochastic MPC methods, the nominal
system~\eqref{eq:nominal_system} is re-initialized at each time step in order
to minimize the nominal objective, which causes a reset
of the corresponding error system. While this works in a robust setting, it
prohibits a direct probabilistic analysis using PRS according to Definition~\ref{def:PRS},
that only provides statements about the autonomous error system, starting from
time $k=0$ and evolving linearly for all future times. Consequently and in
contrast to classical formulations, we compute~\eqref{eq:nominal_system}
via~\eqref{eq:PMPSC_initial_nominal}, which leads to the error
dynamics~\eqref{eq:error_dynamics} despite online replanning of the
nominal trajectory at each time step compared also to Figure~\ref{fig:proof_pmpsc}
accompanying the proof of Theorem~\ref{thm:pmpsc}.

Building on the tube-based controller structure,
an input is certified as safe if it can be represented in
the form of \eqref{eq:auxiliary_control} by selecting
$\mVpredOpt_0$ accordingly. Otherwise an alternative input
is provided ensuring that $\Pr(e(k)\in\mTube_x)\geq p_x$,
$\Pr(e(k)\in\mTube_u)\geq p_u$ for all $k\in\mIntGeq{0}$.
Combining this mechanism with the assumptions from above yield the
main result of the paper.

\begin{theorem}\label{thm:pmpsc}
  Let Assumptions~\ref{ass:general_PRS_tube}
  and~\ref{ass:tightened_terminal_invariant_set}
  hold. If~\eqref{eq:PMPSC_opt} is feasible for $z(0) = x(0)$,
  then system~\eqref{eq:general_nonlinear_system} under the control
  law~\eqref{eq:auxiliary_control} with $v(k) = \mVpredOpt_{0|k}$ resulting from
  the PMPSC problem~\eqref{eq:PMPSC_opt} is safe for all $u_\LL(k)$ and for all times, i.e.,\
  the chance constraints~\eqref{eq:chance_constraints} are satisfied for all $k\geq 0$.
  \END
\end{theorem}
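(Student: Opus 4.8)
The plan is to prove the claim in two stages: recursive feasibility of~\eqref{eq:PMPSC_opt}, and then, building on it, satisfaction of the chance constraints~\eqref{eq:chance_constraints} by the resulting closed loop. For the first stage I would show that feasibility of~\eqref{eq:PMPSC_opt} at time $k-1$, with solution $\{\mZpredOpt_{i|k-1},\mVpredOpt_{i|k-1}\}$, implies feasibility at time $k$ for \emph{every} $u_\LL(k)$, since $u_\LL$ enters only the objective and not the constraints. The candidate is the usual shifted nominal trajectory: $\mZpred_{i|k}=\mZpredOpt_{i+1|k-1}$ for $i\in\mIntInt{0}{N-1}$, $\mVpred_{i|k}=\mVpredOpt_{i+1|k-1}$ for $i\in\mIntInt{0}{N-2}$, completed by $\mVpred_{N-1|k}=\kappa_f(\mZpredOpt_{N|k-1})$ and $\mZpred_{N|k}=A\mZpredOpt_{N|k-1}+B\kappa_f(\mZpredOpt_{N|k-1})$. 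The nominal dynamics~\eqref{eq:PMPSC_opt_nominal_dynamics} hold on the shifted part by the corresponding constraint at $k-1$ and on the last step by construction; the tightened constraints~\eqref{eq:PMPSC_opt_tightening_x}--\eqref{eq:PMPSC_opt_tightening_u} and the terminal constraint~\eqref{eq:PMPSC_opt_terminal} hold on the shifted part because $\mZpredOpt_{N|k-1}\in\ZZ_f\subseteq\XX\ominus\mTube_x$ and on the last step by Assumption~\ref{ass:tightened_terminal_invariant_set}; and the initialization~\eqref{eq:PMPSC_initial_nominal} holds since the virtually simulated nominal state satisfies $z(k)=Az(k-1)+Bv(k-1)=A\mZpredOpt_{0|k-1}+B\mVpredOpt_{0|k-1}=\mZpredOpt_{1|k-1}=\mZpred_{0|k}$. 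Taking the theorem's hypothesis as the base case at $k=0$ (the virtual nominal system initialized at $z(0)=x(0)$), induction gives feasibility for all $k\geq0$, so the applied input $u(k)=\mVpredOpt_{0|k}+\mTubeControl(x(k)-\mZpredOpt_{0|k})$ is well defined at every step.

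The key step — the one that departs from classical tube MPC — is to verify that the closed-loop error $e(k)=x(k)-z(k)$ obeys the autonomous recursion~\eqref{eq:error_dynamics} with $e(0)=0$, \emph{despite} the nominal trajectory being re-optimized at every step. Because $z$ is advanced by the virtual simulation $z(k+1)=Az(k)+B\mVpredOpt_{0|k}$ rather than reset to $x(k)$, substituting the model form~\eqref{eq:general_linear_additive_description} and the controller~\eqref{eq:auxiliary_control} into $e(k+1)=x(k+1)-z(k+1)$ reproduces exactly the chain of equalities in~\eqref{eq:error_dynamics}, giving $e(k+1)=(A+B\mTubeControl)e(k)+w_m(k)+w_s(k)$, while $z(0)=x(0)$ yields $e(0)=0$. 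I would emphasize that this is independent of which feasible pair $(\mZpredOpt_{0|k},\mVpredOpt_{0|k})$ the optimizer returns, as $z(k)$ and $v(k)$ are updated consistently — this is the situation sketched in Figure~\ref{fig:proof_pmpsc}. Consequently Definition~\ref{def:PRS} together with Assumption~\ref{ass:general_PRS_tube} applies and yields $\Pr(e(k)\in\mTube_x)\geq p_x$ and $\Pr(e(k)\in\mTube_u)\geq p_u$ for all $k\geq0$.

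To close, feasibility at time $k$ forces $\mZpredOpt_{0|k}\in\XX\ominus\mTube_x$ and $\mVpredOpt_{0|k}\in\UU\ominus\mTubeControl\mTube_u$, so by the definition of the Pontryagin difference, on the event $\{e(k)\in\mTube_x\}$ one has $x(k)=\mZpredOpt_{0|k}+e(k)\in\XX$ and on $\{e(k)\in\mTube_u\}$ one has $u(k)=\mVpredOpt_{0|k}+\mTubeControl e(k)\in\UU$; combining with the PRS bounds above gives $\Pr(x(k)\in\XX)\geq p_x$ and $\Pr(u(k)\in\UU)\geq p_u$ for all $k\geq0$, which is~\eqref{eq:chance_constraints}. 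I expect the main obstacle to be the middle step: making rigorous that online replanning of the nominal trajectory leaves the error system simultaneously autonomous and anchored at $e(0)=0$, since this is exactly what legitimizes invoking the PRS property of Assumption~\ref{ass:general_PRS_tube} — everything else is the standard recursive-feasibility-plus-constraint-tightening argument.
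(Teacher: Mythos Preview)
Your proposal is correct and follows essentially the same three-part argument as the paper's proof: recursive feasibility via the shifted candidate and Assumption~\ref{ass:tightened_terminal_invariant_set}, invariance of the autonomous error dynamics~\eqref{eq:error_dynamics} under replanning thanks to~\eqref{eq:PMPSC_initial_nominal}, and then combining the tightened constraints with the PRS bounds to recover~\eqref{eq:chance_constraints}. The only cosmetic difference is ordering --- the paper treats the error dynamics before recursive feasibility --- and your write-up is somewhat more explicit about the shifted candidate and the Pontryagin-difference step.
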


\begin{proof}
    We begin by investigating the error dynamics under \eqref{eq:auxiliary_control}.
    By \eqref{eq:PMPSC_initial_nominal} it follows that $e(k)$ evolves
    for all $k \in \mIntGeq{0}$, despite re-optimizing $\mVpred_{i|k},\mZpred_{i|k},$
    based on $u_\LL(k)$ according to \eqref{eq:PMPSC_opt} at every time step, see also
    Figure~\ref{fig:proof_pmpsc}. Therefore $\Pr(e(k) \in \mTube_x) \geq p_x$ and
    $\Pr(e(k) \in \mTube_u) \geq p_u$ for all $k\in\mIntGeq{0}$ by
    Assumption~\ref{ass:general_PRS_tube} and $z(0) = x(0)$.
    
    Next, Assumption~\ref{ass:tightened_terminal_invariant_set} provides recursive feasibility
    of optimization problem~\eqref{eq:PMPSC_opt},
    i.e.\ if a feasible solution at time $k$ exists, one will always exist
    at $k\!+\!1$, specifically $\{\mVpredOpt_{1|k},\ldots,\mVpredOpt_{N-1|k}, \kappa_f(\mZpredOpt_{N|k})\}$
    is a feasible solution,
    which implies feasibility of \eqref{eq:PMPSC_opt} for all $k\geq 0$ by induction.
	
    Finally, by recursive feasibility it follows that $z(k)\in\XX\ominus\mTube_x$ and
    $v(k)\in\UU\ominus \mTubeControl\mTube_u$ for all $k\in\mIntGeq{0}$, implying in combination with
    $\Pr(e(k) \in \mTube_x) \geq p_x$ and $\Pr(e(k) \in \mTube_u) \geq p_u$
    for all $k\in\mIntGeq{0}$ that $\Pr\{ x(k) = z(k) + e(k) \in \XX\}\geq p_x$ and
	$\Pr\{ u(k) = v(k) + \mTubeControl e(k) \in \UU\}\geq p_u$ for all $k\in\mIntGeq{0}$.
	
	We therefore prove that if~\eqref{eq:PMPSC_opt} is feasible for $z(0) = x(0)$,
    \eqref{eq:auxiliary_control} will always provide a control input such that constraints
    \eqref{eq:chance_constraints} are satisfied.
\end{proof}

\begin{remark}[Recursive feasibility despite unbounded disturbances]\label{rem:comparison_stmpc}
    Various recent stochastic model predictive control approaches, which
    consider chance constraints in the presence of unbounded additive noise, are
    also based on constraint tightening (see \cite{Farina2013,Farina2016,Paulson2017,hewing2018stochastic}).
    The online MPC problem in these formulations can become infeasible due to the unbounded
    noise, which is compensated for by employing a recovery mechanism.
    In contrast, the proposed formulation offers
    inherent recursive feasibility, even for unbounded disturbance realizations $w_s(k)$,
    by simulating the nominal system and therefore not optimizing over $\mZpred_0$.
    While optimization over $\mZpred_0$ usually enables state feedback in tube-based stochastic
    model predictive control methods, we incorporate state feedback
    through the cost function of \eqref{eq:PMPSC_opt}. A similar strategy can also be
    used for recursively feasible stochastic MPC schemes, as presented
    in~\cite{hewing2018recursively}.
    \END
\end{remark}

\subsection{Discussion: Safe RL with predictive safety certification}
    While the application of RL algorithms together with
    the presented PMPSC scheme for safety can be viewed as
    learning to control an inherently safe system of the form
    \begin{align*}
        x(k+1) = f_{\theta}(x(k), \pi_\mSafeSet(k, x(k), \tilde u(k))) +  w_s(k), ~ \forall k\in\mIntGeq{0}
    \end{align*}
    with virtual inputs $\tilde u(k) \in\RR^m$, the underlying
    safety mechanism $\pi_\mSafeSet$ can introduce significant
    additional nonlinearities into the system to be optimized, which may
    cause slower learning convergence.
    In the following, we therefore outline practical and theoretical
    aspects to retain learning convergence, when using RL in conjunction
    with the presented PMPSC scheme to ensure safety.

    A simple mechanism to alleviate this limitation has been introduced in the context
    of general safety frameworks \cite{akametalu2014reachabilitySafety} and
    is based on adding a term of the form
    $(u(k) - u_\LL(k))^\top R_\mSafeSet(u(k) - u_\LL(k))$,
    $R_\mSafeSet\succ 0$ to the overall learning objective, which
    accounts for deviations between the proposed and the applied input to
    system \eqref{eq:general_nonlinear_system}, implying
    a larger cost for unsafe policies.
    While this can encourage learning convergence in practice, many successful
    RL algorithms such as Trust-Region-Policy-Optimization (TRPO) \cite{Schulman2015},
    deterministic actor-critic policy search \cite{silver2014} or
    Bayesian Optimization \cite{martinez2007active,frohlich2020bayesian} that provide
    theoretical properties in terms of approximate gradient steps or cumulative regret,
    require continuity of the applied control policy.
    More precisely, continuity of the applied control policy is often a prerequisite
    to adjust the learning-based policy via approximate gradient
    directions of the overall learning objective and is typically required to establish
    regularization properties of the corresponding value function.

    Using results from explicit model predictive control, it
    is indeed possible to establish that the PMPSC problem and therefore
    the resulting safe learning-based policy $\pi_\mathcal{\mSafeSet}$ is continuous
    in $x(k)$ and $u_{\mathcal L}(k)$, see e.g.~\cite[Theorem 1.4]{borrelli2003constrained}.
    Furthermore, by using the results from \cite{agrawal2019differentiable},
    it is even possible to compute the partial derivatives of~\eqref{eq:PMPSC_opt}
    with respect to the learning input, i.e.,
    $\frac{\partial}{\partial u_\mathcal{L}}\pi_\mathcal{\mSafeSet}(k, x, u_\mathcal{L})$,
    allowing explicit consideration of the effects of safety-ensuring actions
    during closed-loop control for efficient policy parameter updates.
    In brief, due to the convexity of~\eqref{eq:PMPSC_opt}, continuity properties
    of the policy and value function are typically preserved, and it is possible
    to obtain comparable convergence properties of RL algorithms in combination
    with PMPSC, as illustrated in Figure~\ref{fig:performance} in 
    the numerical example section.

    Note that even under these considerations the combination of a potentially unsafe
    RL algorithm with the PMPSC scheme presented can cause performance to deteriorate
    compared to the plain application of a potentially unsafe RL algorithm, which is,
    however, conceptually inevitable when restricting the space of possible inputs in
    favor of safety guarantees.
    
\section{Data based design}\label{sec:data_driven_design}
In order to employ the PMPSC scheme the following components must be provided: The
model description \eqref{eq:general_linear_additive_description} of the true
system \eqref{eq:general_nonlinear_system} (Section~\ref{subsec:model_design}), the
probabilistic error tubes $\mTube_x$ and $\mTube_u$
based on the model \eqref{eq:general_linear_additive_description} according to
Assumption~\ref{ass:general_PRS_tube} (Section~\ref{subsec:design_omega}),
and the nominal terminal set $\ZZ_f$, which
provides recursive feasibility according to
Assumption~\ref{ass:tightened_terminal_invariant_set} (Section~\ref{subsec:design_Z_f}).
In this section, we present efficient techniques for computing those components
that are tailored to the learning-based control setup by solely relying
on the available data collected from the system.

\subsection{Model design}\label{subsec:model_design}
For simplicity, we focus on systems described by linear Bayesian regression
\cite{rasumssen2006Gaussian,Bishop2006} of the form
\begin{align}\label{eq:linear_bayesian_regression}
    x(k+1) = \theta^\top \begin{psmallmatrix}x\\u \end{psmallmatrix} + w_s(k)
\end{align}
with an unknown parameter matrix $\theta\in\RR^{n\times n + m}$, which is inferred
from noisy measurements $\mData = \{(x_i,u_i),y_i \}_{i=1}^{N_\mData}$ with
$y_k = f_{\theta}(x(k),u(k)) + w_s(k)$, $w_s(k)\sim \mDistribution{w_s}$,
using a prior distribution $\mDistribution{\theta}$ on the parameters $\theta$.
Note that distribution pairs $\mDistribution{w_s}$ and $\mDistribution{\theta}$
that allow for efficient posterior computation, e.g.\ Gaussian distributions, usually exhibit
infinite support, i.e.\ $w_s(k)\in\RR^n$, which can generally not be treated robustly using, e.g.,
the related method presented in \cite{wabersich2018linear}. The correct
selection of the parameter prior and process noise distributions
$\mDistribution{\theta}$ and $\mDistribution{w_s}$ require careful model
selection techniques that are beyond the scope of this paper,
see, e.g., \cite[Section 2.3]{rasumssen2006Gaussian} and
\cite[Chapter 3]{Bishop2006}. In addition, we assume that the
measurements $\mData$ are sufficiently
information-rich, e.g., that they have been generated using
excitation signals as described in~\cite{ljung1998system}.

In the following, we present one way of obtaining the required model error set
$\mAddDisturbance_{\theta}$ using confidence sets based on
the posterior distribution $\mDistribution{\theta|\mData}$.

We start by describing the set of all
realizations of \eqref{eq:linear_bayesian_regression}, which contain the true
system with probability $p_{\theta}$. To this end let the confidence region at probability level $p_{\theta}$ of
the random vector $\theta\sim\mDistribution{\theta|\mData}$, denoted by
$\mConfidenceSet{\mDistribution{\theta|\mData}}{p_{\theta}}$, be defined such that
\begin{align}\label{eq:confidence_set}
    \Pr(\theta\in\mConfidenceSet{\mDistribution{\theta|\mData}}{p_{\theta}})\geq p_{\theta}
\end{align}
and compare the corresponding set of system dynamics with
the expected system dynamics, which is in the considered case given by
\begin{align}\label{eq:expected_linear_dynamics}
    \underbrace{\mExpectation{\theta}^\top}_{=:[A, B]} \begin{psmallmatrix}x(k)\\u(k) \end{psmallmatrix}.
\end{align}
Note that the model error between~\eqref{eq:linear_bayesian_regression} and
\eqref{eq:expected_linear_dynamics} is unbounded by definition, if we consider an unbounded domain as
required by \eqref{eq:bound_on_nonlinearities} since
$\lim_{\mNorm{(x,u)}{2}\rightarrow\infty}||((A,B)-\tilde \theta^\top)(x,u)||_2=\infty$
for any $\tilde \theta\in\mConfidenceSet{\mDistribution{\theta|\mData}}{p_{\theta}}$
such that $\tilde \theta \neq \theta$.
We therefore make the practical assumption that the model error is bounded outside a
sufficiently large `outer' state and input space $\XX_o\times\UU_o\supseteq \XX \times \UU$,
as illustrated in Figure~\ref{fig:linear_model_estimate}, which relates to
Assumption~\ref{ass:bound_on_nonlinearities} as follows.
\begin{assumption}[Bounded model error]\label{ass:outer_inner_state_space}
	The set 
	\begin{align*}
		\tilde \mAddDisturbance_{\theta}\mDef\{ w_m\in\RR^n | & 
		\forall (x,u,\theta)\in \XX_o\times\UU_o\times\mConfidenceSet{\mDistribution{\theta|\mData}}{p_{\theta}} \\
		                                               & Ax + Bu + w_m = \theta^\top \begin{psmallmatrix}x\\u \end{psmallmatrix}\}.
	\end{align*}
    is an overbound of $\mAddDisturbance_{\theta}$ according to Assumption~\ref{ass:bound_on_nonlinearities},
    i.e.\ $\mAddDisturbance_{\theta} \subseteq \tilde \mAddDisturbance_{\theta}$. \END
\end{assumption}

A simple but efficient computation scheme for overapproximating
$\tilde \mAddDisturbance_{\theta}$ using $\mConfidenceSet{\mDistribution{\theta|\mData}}{p_{\theta}}$
can be developed for the special case of a Gaussian prior distribution
$\mCol{i}(\theta)\sim\NN(0,\Sigma^\theta_i)$ and
Gaussian distributed process noise $w_s(k)\sim\NN(0,I_n\sigma_s^2)$.
We begin with the posterior distribution
$\mDistribution{\theta|\mData}$ of $\theta$
conditioned on data $\mData$, given by
\begin{align}\label{eq:bayesian_regression_posterior}
    p(\mCol{i}(\theta) | \mData) = \NN(\sigma_s^{-2}C_i^{-1}X\mCol{i}(y),C_i^{-1}),
\end{align}
where $\mRow{i}(X)=\mFeature(x_i,u_i)^\top$,
$\mRow{i}(y)=y_i^\top$, and $C_i=\sigma_s^{-2}XX^\top + (\Sigma_i^{\theta})^{-1}$,
see, e.g.~\cite{Friedman2001,rasumssen2006Gaussian}.

Using the posterior distribution $\mDistribution{\theta|\mData}$ 
according to \eqref{eq:bayesian_regression_posterior}
we compute a polytopic outer approximation of $\mConfidenceSet{\mDistribution{\theta|\mData}}{p_{\theta}}$
in a second step, which can be used in order to finally obtain an approximation of
$\tilde \mAddDisturbance_{\theta}$ and therefore $\mAddDisturbance_{\theta}$ since
$\mAddDisturbance\subseteq \tilde \mAddDisturbance_{\theta}$ by Assumption~\ref{ass:outer_inner_state_space}.
To this end, we consider the vectorized model parameters $\mVec(\theta)$ and their confidence set
$\mConfidenceSet{\mDistribution{\mVec(\theta)}}{p_{\theta}}=
    \{ \mVec(\theta) \in \RR^{n^2+mn} | \mVec(\theta)^\top C \mVec(\theta) \leq \chi^2_{n^2+mn}(p_{\theta}) \}$,
where $\chi^2_{n^2+mn}$ is the chi-squared distribution of degree $n^2+mn$ and 
\begin{align}\label{eq:helper_matrix_outer_hull}
    C := 
    \begin{pmatrix}
        C_1    & 0      & \cdots & 0      \\
        0      & C_2    & \cdots & 0      \\
        \vdots & \ddots &        & \vdots \\
        0      & 0      & \vdots & C_n
    \end{pmatrix}
\end{align}
is the posterior covariance according to \eqref{eq:bayesian_regression_posterior}. A computationally cheap
outer approximation of $\mConfidenceSet{\mDistribution{\mVec(\theta)}}{p_{\theta}}$ can be obtained by
picking its major axes $\{\tilde \theta_i\}_{i=1}^{n^2+mn}$ using singular value decomposition of $C$,
which provide the vertices of an inner polytopic approximation $\mConvexHull{\{\tilde \theta_i\}_{i=1}^{n^2+mn}}$
of $\mConfidenceSet{\mDistribution{\mVec(\theta)}}{p_{\theta}}$.
Scaling this inner approximation by $\sqrt{n^2+mn}$ \cite{boyd2004convex}
yields vertices of an outer polytopic approximation of $\mConfidenceSet{\mDistribution{\mVec(\theta)}}{p_{\theta}}$
given by the convex hull $\mConvexHull{\{\theta_i\}_{i=1}^{n^2+mn}}$ with $\theta_i = \sqrt{n^2+mn}\tilde \theta_i$.

Based on this outer approximation of $\mConfidenceSet{\mDistribution{\mVec(\theta)}}{p_{\theta}}$, it
is possible to compute a corresponding outer approximation of $\tilde \mAddDisturbance_{\theta}$
as follows. Due to the convexity of $\mConvexHull{\{\theta_i\}_{i=1}^{n^2+mn}}$, it is sufficient to impose
\begin{align}\label{eq:disturbance_set_based_on_confidence_set}
    \theta_i^\top \begin{psmallmatrix}x\\u\end{psmallmatrix} - (Ax + Bu)\in \tilde\mAddDisturbance_{\theta}
\end{align}
for all $i\in\mIntInt{1}{n^2+mn}$, $x\in\XX_o$, $u\in\UU_o$,
since by definition of $\mConfidenceSet{\mDistribution{\mVec(\theta)}}{p_{\theta}}$ 
and Assumption~\ref{ass:outer_inner_state_space} we have
with probability at least $p_{\theta}$ that $\lambda_i(x,u) \geq 0$
with $\sum_{i=1}^{n^2}\lambda_i(x,u) = 1$ exists such that 
\begin{align*}
        f_{\theta}(x,u) & =  \theta(x,u)^\top \begin{psmallmatrix}x\\u\end{psmallmatrix}                                         \\
                & = \sum_{i=1}^{n^2}\lambda_i(x,u)\theta_i^\top\begin{psmallmatrix}x\\u\end{psmallmatrix}                      \\
                &\in \sum_{i=1}^{n^2}\lambda_i(x,u)\left( Ax + Bu \oplus \tilde \mAddDisturbance_{\theta} \right) \\
                & \in \left\{ A x +  B u\right\} \oplus \tilde \mAddDisturbance_{\theta}.
\end{align*}
Therefore, \eqref{eq:disturbance_set_based_on_confidence_set} can be used in order to
construct an outer approximation $\tilde \mAddDisturbance_{\theta} = \{w\in\RR^n|~||w||_2 \leq w_{\mathrm{max}}\}$, where
\begin{align}\label{eq:W_bound_Linear_System}
    w_{\mathrm{max}} := \max_{i\in\mIntInt{1}{n^2+mn}}\left(\max_{x\in\XX_o,u\in\UU_o}
        \left\vert\left\vert(\theta_i^\top-(A~ B))
            \begin{pmatrix} x \\ u \end{pmatrix}\right\vert\right\vert _2\right)
\end{align}
with $(A~B):=\mExpectation{\theta}^\top$, $\{\theta_i\}_{i=1}^{n^2+mn} = \sqrt{2}\{\tilde \theta_i\}_{i=1}^{n^2+mn}$ and
$\tilde \theta_i$ the major axes of $\mConfidenceSet{\mDistribution{\mVec(\theta)}}{p_{\theta}}$.

\begin{figure}
	\centering
	\begin{tikzpicture}[scale = 0.9]

\small


\coordinate (CX0) at (-3,0);
\coordinate (CX1) at (2.5,0);
\coordinate (CY0) at (0,-2);
\coordinate (CY1) at (0,2);

\draw[->] (CX0) -- (CX1) node[below] {$x$};
\draw[->] (CY0) -- (CY1);

\coordinate (X0) at (-1,-0.05);
\coordinate (X1) at (1,-0.05);
\coordinate (Xo0) at (-1.8,-0.05);
\coordinate (Xo1) at (1.8,-0.05);

\coordinate (tick) at (0,0.1); 
\draw (X0) -- ($(X0) + (tick)$) node[above] {$\partial \XX$};
\draw ($(X1) + (tick)$) -- (X1) node[below] {$\partial \XX$};
\draw (Xo0) -- ($(Xo0) + (tick)$) node[above] {$\partial \XX_o$};
\draw ($(Xo1) + (tick)$) -- (Xo1) node[below] {$\partial \XX_o$};

\draw[scale=1,domain=-3:3,smooth,variable=\x,blue]
    plot ({\x},{(1/3)*\x})
    node[right] {$Ax$};

\draw[dashed,scale=1,domain=-1.8:1.8,smooth,variable=\x,red]
    plot ({\x},{0.518*\x});
\draw[dashed,scale=1,domain=-1.8:1.8,smooth,variable=\x,red]
    plot ({\x},{(0.15)*\x});

\draw[dashed,scale=1,domain=1.8:3,smooth,variable=\x,red]
    plot ({\x},{0.33 + (1/3)*\x});
\draw[dashed,scale=1,domain=1.8:3,smooth,variable=\x,red]
    plot ({\x},{-0.33 + (1/3)*\x});

\draw[dashed,scale=1,domain=-1.8:-3,smooth,variable=\x,red]
    plot ({\x},{0.33 + (1/3)*\x});
\draw[dashed,scale=1,domain=-1.8:-3,smooth,variable=\x,red]
    plot ({\x},{-0.33 + (1/3)*\x});

\draw[scale=1,domain=3:-3,smooth,variable=\x,gray]
    plot ({\x},{-0.4*sin(300*\x)*(1-exp(-0.25*abs(\x))) + 0.3 * \x })
    node[left] {$f(x)$};
	\end{tikzpicture}
    \caption{Bounded error (dashed red lines) between the nominal model $Ax$ (blue line) in
        \eqref{eq:general_linear_additive_description} and a sample of the true dynamics $f_{\theta}(x)$ (gray line)
        beyond the outer bound $\partial \XX_o$ of the state space $\XX$
        according to Assumption~\ref{ass:outer_inner_state_space}.}
        \label{fig:linear_model_estimate}
\end{figure}
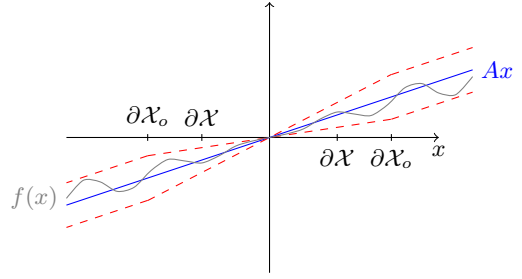

\subsection{Calculation of $\mTube$ for uncertain linear dynamics and unbounded disturbances}\label{subsec:design_omega}
In this subsection, we provide a method to compute a PRS set $\mTube$ 
with pre-specified probability level $p$ according to Assumption
\ref{ass:general_PRS_tube} that can be used for obtaining both, a PRS
$\mTube_x$ at probability level $p_x$ corresponding to the state constraints
\eqref{eq:chance_constraints}, and a PRS $\mTube_u$ at probability level $p_u$
for input constraints, respectively. As is common in the context of related MPC
methods, the computations are based on choosing a stabilizing tube controller
$K_\mTube$ in \eqref{eq:auxiliary_control} first, e.g.\ using LQR design,
in order to subsequently efficiently compute the PRS $\mTube_x$ and $\mTube_u$. 

The proposed PRS computation distinguishes between an error resulting from a
model mismatch between~\eqref{eq:general_nonlinear_system} and~\eqref{eq:general_linear_additive_description},
and an error caused by possibly unbounded process noise due to $w_s(k)\sim \mDistribution{w_s}$.
The error system~\eqref{eq:error_dynamics} admits a decomposition $e(k) = e_m(k) +
e_s(k)$ with $e(0) = e_m(0) = e_s(0) = 0$ and
\begin{align}
 e_m(k) &= (A+B\mTubeControl)e_m(k) + w_{\theta}(k), \label{eq:model_error_dynamics}\\
 e_s(k) &= (A+B\mTubeControl)e_s(k) + w_s(k). \label{eq:stochastic_error_dynamics}
\end{align}
While the fact that $\mDistribution{w_s}$ is known allows us to compute a
PRS with respect to $e_s(k)$ as described in \ref{subsec:design_omega} \textit{a)},
the model error $e_m(k)$ is state and input dependent with unknown distribution.
Therefore, we bound $e_m$ robustly in probability using the concept of robust invariance,
i.e., a robustly positive invariant set accounts deterministically for
all possible model mismatches $w_{\theta}(k)\in\mAddDisturbance_{\theta}$ at probability
level $p_{\theta}$ according to the following definition.
\begin{definition}\label{def:RIS}
    A set $\mathcal E$ is said to be a robustly positive invariant set (RIS) for system
    \eqref{eq:model_error_dynamics} if $e_m(0)\in\mathcal E$ implies
    that $e_m(k) \in\mathcal E$ for all $k\in\mIntGeq{0}$.
    It is called a RIS at probability level $p_{\theta}$ if
    $\Pr(\mathcal E \text{ is RIS} )\geq p_{\theta}$.
    \END
\end{definition}
This enables us to state the following lemma for the cumulated error $e(k)$
according to \eqref{eq:error_dynamics}.
\begin{lemma}\label{lem:split_PRS_sets}
    If $\mathcal E$ is a RIS at probability level $p_{\theta}$ for
    the model error system~\eqref{eq:model_error_dynamics} and $\mTube_s$
    is a PRS for the disturbance error system~\eqref{eq:stochastic_error_dynamics}
    at probability level $p_s$, then $\mTube = \mathcal E \oplus \mTube_s$
    is a PRS for the cumulated error system \eqref{eq:error_dynamics}
    at probability level $p_{\theta} p_s$.
    \END
\end{lemma}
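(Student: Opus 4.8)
The plan is to exploit the linear splitting $e(k) = e_m(k) + e_s(k)$ with $e(0) = e_m(0) = e_s(0) = 0$, already obtained from \eqref{eq:error_dynamics}, \eqref{eq:model_error_dynamics} and \eqref{eq:stochastic_error_dynamics}, and to lower-bound $\Pr(e(k) \in \mTube)$ by the probability of the joint event that $e_m(k) \in \mathcal E$ \emph{and} $e_s(k) \in \mTube_s$: on that event $e(k) = e_m(k) + e_s(k) \in \mathcal E \oplus \mTube_s = \mTube$ directly by the definition of the Minkowski sum. It then remains to show that the two constituent events each hold with probability at least $p_m$, resp.\ $p_s$, for every $k \in \mIntGeq{0}$, and that they are independent, so that their intersection has probability at least $p_m p_s$.

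For the model-error part I would argue as follows. Since the RIS $\mathcal E$ contains the origin (true for the construction in Section~\ref{subsec:design_omega}, in line with $0 \in \mAddDisturbance_m$) and $e_m(0) = 0$, Definition~\ref{def:RIS} yields that on the event ``$\mathcal E$ is a RIS for \eqref{eq:model_error_dynamics}'' one has $e_m(k) \in \mathcal E$ for all $k$; hence $\Pr(e_m(k) \in \mathcal E) \geq \Pr(\mathcal E\text{ is RIS}) \geq p_m$. This event is a statement about the (random) model only --- it is implied by the event of Assumption~\ref{ass:bound_on_nonlinearities}, namely $w_m(x,u) \in \mAddDisturbance_m$ for all $x,u$, which is a function of the parameter realization $\theta$ alone. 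For the stochastic part, $e_s(k) = \sum_{j=0}^{k-1}(A+B\mTubeControl)^{k-1-j} w_s(j)$ is a function of the i.i.d.\ process noise only, and Definition~\ref{def:PRS} applied to $\mTube_s$ with $e_s(0) = 0$ gives $\Pr(e_s(k) \in \mTube_s) \geq p_s$ for every $k$.

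Finally I would combine the two. Because the model randomness ($\theta$, equivalently the Assumption~\ref{ass:bound_on_nonlinearities} event) and the future process-noise sequence $\{w_s(\cdot)\}$ are independent, the events $\{\mathcal E\text{ is RIS}\}$ and $\{e_s(k)\in\mTube_s\}$ are independent, so
\[
  \Pr\!\big(e(k) \in \mTube\big) \;\geq\; \Pr\!\big(\mathcal E\text{ is RIS}\big)\,\Pr\!\big(e_s(k)\in\mTube_s\big) \;\geq\; p_m\, p_s
\]
for all $k \in \mIntGeq{0}$, where the first inequality also uses the Minkowski inclusion above; together with $e(0) = 0$ this is exactly the statement that $\mTube = \mathcal E \oplus \mTube_s$ is a PRS at probability level $p_m p_s$ for \eqref{eq:error_dynamics}. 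The step that needs the most care is this independence claim: one must make precise that ``$\mathcal E$ is a RIS'' is determined solely by the random model and is stochastically independent of the operational disturbance realizations --- this is the standing modelling assumption that lets the probabilities multiply rather than only combine through a (weaker) union bound. A minor point to record is $0 \in \mathcal E$, without which the RIS property could not be triggered from $e_m(0) = 0$.
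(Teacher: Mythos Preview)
Your proposal is correct and follows essentially the same route as the paper: use the Minkowski-sum inclusion $\{e_m(k)\in\mathcal E\}\cap\{e_s(k)\in\mTube_s\}\subseteq\{e(k)\in\mTube\}$, invoke the RIS/PRS definitions from $e_m(0)=e_s(0)=0$, and factor the joint probability by independence to get the bound $p_mp_s$. Your treatment is in fact more careful than the paper's, which simply writes $\Pr(e_m(k)\in\mathcal E\land e_s(k)\in\mTube_s)=\Pr(e_m(k)\in\mathcal E)\Pr(e_s(k)\in\mTube_s)$ without comment; your choice to work with the event ``$\mathcal E$ is RIS'' (a function of the model alone) rather than ``$e_m(k)\in\mathcal E$'' (which, through $w_m(x(k),u(k))$, could in principle depend on past $w_s$) is the cleaner way to justify that factorization, and your remark that $0\in\mathcal E$ is needed is a valid minor point the paper leaves implicit.
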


\begin{proof}
    By the definition of the Minkowski sum, $e_m(k)\in\mathcal E$ and $e_s(k)\in\mTube_s$ implies $e(k)\in\mTube$ for any
     $k\in\mIntGeq{0}$. Choosing $e_s(0)=e_m(0)=0$
     yields for all $k\in\mIntGeq{0}$
    \begin{align*}
        \Pr(e(k)\in\mTube) &\geq \Pr(e_m(k)\in\mathcal E \land e_s(k)\in \mTube_s) \\
                           &= \Pr(e_m(k)\in\mathcal E)\Pr(e_s(k)\in\mTube_s)\\
                           &\geq p_{\theta}p_s,
    \end{align*}
    due to independence, which proves the result.
\end{proof}
Lemma \ref{lem:split_PRS_sets} allows computation of the PRS $\mTube_s$ that accounts for
the stochastic disturbances \eqref{eq:stochastic_error_dynamics} independently of the
RIS $\mathcal E$, dealing with the model uncertainty $\tilde \mAddDisturbance$.
In the following we present one option for determining $\mTube_s$
and refer to \cite{hewing2018stochastic} for further computation methods. Then
an optimization problem for the synthesis of $\mathcal E$ is given based on the
model obtained from Section \ref{subsec:model_design}.
\paragraph{PRS $\mTube_s$ for stochastic errors}\label{par:design_prs}
Using the variance $\mVariance{\mDistribution{w_s}}$ and the Chebyshev bound,
a popular way to compute $\mTube_s$ is given by solving the
Lyapunov equation $A_{cl}^\top \Sigma_\infty A_{cl} - \Sigma_\infty = - \mVariance{\mDistribution{w_s}}$
for $\Sigma_\infty$, which yields the PRS 
\begin{align}\label{eq:stochastic_ris}
    \mTube_s = \{ e_s \in\RR^n | e_s^\top \Sigma_\infty e_s \leq \tilde p\}
\end{align}
with probability level $p = 1 - n_x/\tilde p$, see e.g.~\cite{hewing2018stochastic}.
Furthermore, if $\mDistribution{w_s}$ is a normal distribution, $\mTube_s$
with $\tilde p = \chi^2_n(p)$ is a PRS of probability level $p$, where $\chi^2_n(p)$
is the quantile function of the $n$-dimensional chi-squared distribution.
\paragraph{RIS $\mathcal E$ at probability level $p_\theta$ for ellipsoidal model errors}
Given a bound on the model error according to Assumption~\ref{ass:bound_on_nonlinearities}
of the form $\mAddDisturbance_{\theta} = \{ w\in\RR^n ~|~ w^\top Q^{-1} w \leq 1\}$ with
$Q\in\mSetPosSymMat{n}$, e.g.\ $Q^{-1} := I_n w_{\mathrm{max}}^{-2}$ using
\eqref{eq:W_bound_Linear_System}, we can make use of methods from robust
control \cite{Boyd1994} in order to construct a possibly
small set $\mathcal E = \{e|e^\top P e \leq \alpha\}$ at probability
level $p_{\theta}$ by solving
\begin{subequations}\label{eq:linear_omega_opt}
	\begin{align}\label{eq:linear_omega_opt_objective}
		  & \max_{\alpha^{-1},\tau_{0},\tau_{1}} ~ \alpha^{-1}                   \\\nonumber
		  & \text{s.t.} :                                                        \\
		 & \begin{pmatrix}
			A_{cl}^\top P A_{cl} - \tau_{0} P & A_{cl}^\top P                    \\
			P A_{cl}                          & P - \tau_{1} Q^{-1}
		\end{pmatrix} \preceq 0 \label{eq:linear_omega_opt_lmi} \\
        & ~1 - \tau_{0} -  \bar{p} \tau_{1} \alpha^{-1} \geq 0 \, ,             \\
		  & ~\tau_{0}, \tau_{1} > 0 \, \label{eq:tau_conds},
	\end{align}
\end{subequations}
where $P\in\mSetPosSymMat{n}$ has to be pre-selected using, e.g., the infinite horizon LQR
cost $x(k)^\top P x(k)$, corresponding to the LQR feedback $u(k)=\mTubeControl x(k)$.
As pointed out in~\cite{hewing2018pis}, optimization
problem~\eqref{eq:linear_omega_opt} has a monotonicity property in the
bilinearity $\tau_{1}\alpha^{-1}$ such that it can be efficiently solved using a bisection
on the variable $\alpha^{-1}$. A more advanced design procedure, yielding less conservative
robust invariant sets, can be found, e.g., in \cite{Limon2008}.

In summary, based on the uncertainty of the system parameters
with respect to their true values inferred from data
and by solving \eqref{eq:linear_omega_opt}, we obtain a RIS $\mathcal E$ at
probability level $p_{\theta}$, which contains the model error~\eqref{eq:model_error_dynamics}.
Together with the PRS $\mTube_s$ from Section~\ref{subsec:design_omega} \textit{a)},
Lemma~\ref{lem:split_PRS_sets} provides the overall PRS for
the error system~\eqref{eq:error_dynamics}, which is given by
$\mTube = \mTube_s \oplus \mathcal E$ at probability level $p_s p_{\theta}$. Note
that the ratio between $p_{\theta}$ and $p_s$ can be freely chosen in order to
obtain overall tubes $\mTube_x$, $\mTube_u$ at probability levels
$p_x=p_s^{\mTube_x}p_{\theta}^{\mTube_x}$ and $p_u=p_s^{\mTube_u}p_{\theta}^{\mTube_u}$
according to the chance constraints~\eqref{eq:chance_constraints}.

\subsection{Iterative construction of the terminal safe set $\ZZ_f$}\label{subsec:design_Z_f}
While the terminal constraint~\eqref{eq:PMPSC_opt_terminal} in combination with
Assumption~\ref{ass:tightened_terminal_invariant_set} is key in order to provide a safe backup
control policy $\pi_\mSafeSet$ for all future times, it can restrict
the feasible set of~\eqref{eq:PMPSC_opt}.
The goal is therefore to provide a large terminal set $\ZZ_f$ yielding potentially
less conservative modifications of the proposed learning-based control input $u_\LL$
according to \eqref{eq:PMPSC_opt}.

This can be iteratively achieved by recycling previously calculated
solutions to~\eqref{eq:PMPSC_opt}, starting from a potentially
conservative initial terminal set $\ZZ_f$ according to
Assumption~\ref{ass:tightened_terminal_invariant_set}.
Such an initialization can be computed using standard invariant set methods for linear systems, see
e.g.~\cite{blanchini1999setInvariance} and references therein.
Note that the underlying idea of iteratively enlarging the terminal set
is related to the concepts presented, e.g., in~\cite{Brunner2013, Rosolia2017}.

Let the set of nominal predicted states obtained from successfully solved instances
of~\eqref{eq:PMPSC_opt} be denoted by
$\bm z^*(k) = \{ z^*_j(x(i)), i \in \mathcal I_{[1,k]}, j\in\mathcal I_{[0,N]} \}$.

\begin{proposition}\label{prop:enlargement_of_nominal_terminal_set}
    If Assumption~\ref{ass:tightened_terminal_invariant_set} holds
    for $\ZZ_f$ and~\eqref{eq:PMPSC_opt} is convex, then the set
    \begin{align}\label{eq:enlargement_of_nominal_terminal_set}
        \ZZ_f^k \mDef \mConvexHull{\bm z^*(k)} \cup \ZZ_f
    \end{align}
    satisfies Assumption~\ref{ass:tightened_terminal_invariant_set}.
    \END
\end{proposition}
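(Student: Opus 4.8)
The plan is to verify, for the enlarged set $\ZZ_f^k$, the two clauses of Assumption~\ref{ass:tightened_terminal_invariant_set}: the inclusion $\ZZ_f^k\subseteq\XX\ominus\mTube_x$, and the existence of a control law $\mDefFunction{\kappa_f}{\ZZ_f^k}{\UU\ominus\mTubeControl\mTube_u}$ with $Az+B\kappa_f(z)\in\ZZ_f^k$ for all $z\in\ZZ_f^k$. Write $\zeta_i\mDef\mZpredOpt_0(x(i))$, so $\bm z^*(k)=\{\zeta_1,\dots,\zeta_k\}$. Two facts, both consequences of recursive feasibility of the closed loop (Theorem~\ref{thm:pmpsc}), drive the argument: each $\zeta_i$ carries a stored optimal solution $\{\mZpredOpt_{j|i}\}_{j=0}^{N}$, $\{\mVpredOpt_{j|i}\}_{j=0}^{N-1}$ of~\eqref{eq:PMPSC_opt} with $\mZpredOpt_{0|i}=\zeta_i$ and $\mZpredOpt_{N|i}\in\ZZ_f$; and, since the nominal state in~\eqref{eq:PMPSC_opt} is simulated forward rather than re-initialised, $\zeta_i=z(i)$ and $A\zeta_i+B\mVpredOpt_{0|i}=z(i{+}1)=\zeta_{i+1}$ by~\eqref{eq:nominal_system} for every $i\in\mIntInt{1}{k-1}$.

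First I would settle the state-constraint inclusion. We have $\ZZ_f\subseteq\XX\ominus\mTube_x$ by Assumption~\ref{ass:tightened_terminal_invariant_set} for $\ZZ_f$, and~\eqref{eq:PMPSC_opt_tightening_x} at prediction index $0$ and time $i$ gives $\zeta_i\in\XX\ominus\mTube_x$ for every $i$. Since $\XX$ is polytopic, $\XX\ominus\mTube_x=\bigcap_{w\in\mTube_x}(\XX-w)$ is convex, hence $\mConvexHull{\bm z^*(k)}\subseteq\XX\ominus\mTube_x$, and the union with $\ZZ_f$ gives the inclusion. The same observation shows $\UU\ominus\mTubeControl\mTube_u$ is convex, which is needed below.

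Next I would construct $\kappa_f$. For $z\in\ZZ_f$, let $\kappa_f(z)$ be the terminal control law of Assumption~\ref{ass:tightened_terminal_invariant_set} for $\ZZ_f$; then $\kappa_f(z)\in\UU\ominus\mTubeControl\mTube_u$ and $Az+B\kappa_f(z)\in\ZZ_f\subseteq\ZZ_f^k$. For $z\in\mConvexHull{\bm z^*(k)}$, fix a representation $z=\sum_{i=1}^{k}\lambda_i\zeta_i$ with $\lambda_i\geq0$, $\sum_i\lambda_i=1$; the idea is to pick for each vertex $\zeta_i$ an admissible input $v_i\in\UU\ominus\mTubeControl\mTube_u$ with nominal successor $A\zeta_i+Bv_i\in\mConvexHull{\bm z^*(k)}$, and to set $\kappa_f(z)\mDef\sum_i\lambda_i v_i$. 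Convexity of $\UU\ominus\mTubeControl\mTube_u$ then gives $\kappa_f(z)\in\UU\ominus\mTubeControl\mTube_u$, and linearity of the nominal dynamics~\eqref{eq:nominal_system} gives $Az+B\kappa_f(z)=\sum_i\lambda_i(A\zeta_i+Bv_i)\in\mConvexHull{\bm z^*(k)}\subseteq\ZZ_f^k$ — this is the step that needs~\eqref{eq:PMPSC_opt} to be convex, in particular its nominal dynamics to be linear, and it is also why the successors must be driven into the convex piece $\mConvexHull{\bm z^*(k)}$ rather than merely into the generally non-convex union $\ZZ_f^k$.

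The hard part is choosing the $v_i$. For $i\in\mIntInt{1}{k-1}$ I would take $v_i\mDef\mVpredOpt_{0|i}$: it is admissible by~\eqref{eq:PMPSC_opt_tightening_u} at prediction index $0$ and time $i$, and $A\zeta_i+Bv_i=\zeta_{i+1}\in\bm z^*(k)$ by the identity above — exactly where it pays off that the nominal state is not reset. The delicate case, which I expect to be the main obstacle, is the most recent sample $\zeta_k$: applying $\mVpredOpt_{0|k}$ yields $A\zeta_k+B\mVpredOpt_{0|k}=\mZpredOpt_{1|k}$, and the shifted tail $(\mZpredOpt_{1|k},\dots,\mZpredOpt_{N|k},A\mZpredOpt_{N|k}+B\kappa_f(\mZpredOpt_{N|k}))$ of the time-$k$ solution is a feasible trajectory reaching $\ZZ_f$, so $\mZpredOpt_{1|k}$ is itself an admissible nominal initial state of~\eqref{eq:PMPSC_opt}; what remains is to upgrade this to $\mZpredOpt_{1|k}\in\ZZ_f^k$, i.e.\ to show that the visited nominal states together with their one-step successors already form a control-invariant set within the tightened constraints. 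I expect this to be closed exactly as in the iterative terminal-set enlargements of~\cite{Brunner2013,Rosolia2017}, by invoking closed-loop recursive feasibility to absorb the trailing sample.
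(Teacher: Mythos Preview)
Your approach diverges from the paper's in a way that leaves a genuine gap. You aim for literal one-step invariance of $\ZZ_f^k$ by exploiting the chain $\zeta_i\mapsto\zeta_{i+1}$ of consecutive nominal states, and you correctly isolate the obstruction at the trailing sample $\zeta_k$: there is no reason why some admissible $v_k$ should send $A\zeta_k+Bv_k$ into $\mConvexHull{\bm z^*(k)}$, and your proposed closure (``invoking closed-loop recursive feasibility to absorb the trailing sample'') does not work --- recursive feasibility only tells you that $\mZpredOpt_{1|k}$ is a feasible initial nominal state for~\eqref{eq:PMPSC_opt}, not that it lies in $\ZZ_f^k$. The iterative terminal-set constructions you cite handle the analogous trailing-sample issue by \emph{augmenting} the stored set with the predicted successor, which changes the definition of the set rather than proving invariance of the one already given.

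The paper sidesteps the difficulty entirely by not attempting one-step invariance. It uses the \emph{full} stored $N$-step trajectories: for $z=\sum_i\lambda_i\zeta_i\in\mConvexHull{\bm z^*(k)}$, the convex combinations $\sum_i\lambda_i\mZpredOpt_{j|i}$ and $\sum_i\lambda_i\mVpredOpt_{j|i}$, $j=0,\ldots,N$, form --- by convexity of the feasible set of~\eqref{eq:PMPSC_opt} and linearity of the nominal dynamics --- a feasible $N$-step trajectory terminating in $\sum_i\lambda_i\mZpredOpt_{N|i}\in\ZZ_f$. This yields a time-varying terminal control law that steers any $z\in\ZZ_f^k$ into $\ZZ_f$ in $N$ steps while respecting the tightened constraints, which is precisely what Assumption~\ref{ass:tightened_terminal_invariant_set} is used for in the recursive-feasibility argument. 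The chain relation $\zeta_i\to\zeta_{i+1}$ that your argument rests on is never invoked; the paper's proof applies to any finite collection of feasible nominal initial states, consecutive or not, and no special treatment of the last sample is required.
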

\begin{proof}
    We proceed in a manner similar to the proof of \cite[Theorem IV.2]{wabersich2018linear}.
    Let $z\in \ZZ_f^k$ and note that if \eqref{eq:PMPSC_opt} is convex, then
    the feasible set is a convex set, see e.g.~\cite{boyd2004convex}, and therefore
    $\mConvexHull{\bm z^*(k)}$ is a subset of the feasible set. From here, together with
    the fact that the system dynamics are linear, it follows for $z\in\mathrm{co}(\bm z^*(k))$ that multipliers
    $\lambda_{ij} \geq 0$, $\Sigma_{i,j} \lambda_{ij} = 1$ exist such that we have
    $z = \Sigma_{i,j} \lambda_{ij} \mZpredOpt_j(x(i))$ with corresponding input
    $\Sigma_{i,j} \lambda_{ij} \mVpredOpt_j(x(i))$ that satisfies state and input
    constraints due to the convexity of these sets.
    We can therefore explicitly state
    \begin{align*}
        \kappa_f^{\bm z^*(k)}(z) =
        \begin{cases}
            \Sigma_{i,j} \lambda_{ij} \mVpredOpt_j(x(i)), \text{if}~z\in\mConvexHull{\bm z^*(k)}, \\
            \kappa_f(z), ~\text{else}
        \end{cases}
    \end{align*}
    as the required nominal terminal control law according to
    Assumption~\ref{ass:tightened_terminal_invariant_set} since
    $Az + B\kappa_f^{\bm z^*(k)}(z) \in \ZZ_f^k$ follows from convexity
    of $\mConvexHull{\bm z^*(k)}$ and invariance of $\ZZ_f$.
    Noting that $\mConvexHull{\bm z^*(k)}\subseteq \XX \ominus \mTube_x$ and
    $\mVpredOpt_{k|0}\subseteq \UU\ominus \mTubeControl\mTube_u$ by~
    \eqref{eq:PMPSC_opt_tightening_x}, \eqref{eq:PMPSC_opt_tightening_u}
    shows that for all $z\in\ZZ_f^k$ a control law $\bar \kappa_f$ exists
    according to Assumption~\ref{ass:tightened_terminal_invariant_set}, which completes the proof.
\end{proof}

\subsection{Overall MPSC design procedure}
    Given the methods presented in this section, the MPSC problem synthesis
    from data can be summarized as follows.
    \begin{samepage}
    \begin{labelledsteps}
        \item Compute a linear dynamical model of the form
            \eqref{eq:general_linear_additive_description}
            based on available measurements $\mData$ by estimating
            $\theta$ in \eqref{eq:linear_bayesian_regression} 
            using~\eqref{eq:bayesian_regression_posterior}.
        \item Compute a polytopic confidence set of $\theta$ using
            a singular value decomposition of~\eqref{eq:helper_matrix_outer_hull} to
            obtain the model uncertainty set according to~\eqref{eq:W_bound_Linear_System}.
        \item Compute the PRS $\mTube_{s,x}$, $\mTube_{s,u}$ corresponding to the
            additive stochastic uncertainty according to \eqref{eq:stochastic_ris}.
        \item Compute the RIS $\mathcal E_{x}$, $\mathcal E_{u}$ at the desired probability
            levels based on the model uncertainty by solving~\eqref{eq:linear_omega_opt}.
        \item Perform the state and input constraint tightening with respect to the
            Minkowski sums $\mTube_x = \mTube_{s,x} \oplus \mathcal E_{x}$ and 
            $K_\mTube \mTube_u=K_\mTube\{\mTube_{s,u} \oplus \mathcal E_{u}\}$ to obtain
            \eqref{eq:PMPSC_opt_tightening_x} and \eqref{eq:PMPSC_opt_tightening_u}.
        \item Initialize $\ZZ_f = \{ 0 \}$ (principled
            ways in order to calculate less restrictive $\ZZ_f$ can be found for example in 
            \cite{blanchini1999setInvariance}).
    \end{labelledsteps}        
    Using these ingredients, any potentially unsafe learning-based control
    law $u_\mathcal{L}(k)$ can be safeguarded by solving~\eqref{eq:PMPSC_opt}
    and applying~\eqref{eq:auxiliary_control} at every time step $k$. When
    the constraint tightening~\eqref{eq:PMPSC_opt_tightening_x},
    \eqref{eq:PMPSC_opt_tightening_u} or the nominal terminal set constraint
    \eqref{eq:PMPSC_opt_terminal} are overly conservative,
    it is possible to make use of the system trajectories during closed-loop operation
    to improve the performance. Collected state measurements can be
    used to reduce model uncertainty, allowing tighter bounds on $w_m$
    and recomputation of $\mTube_x,\mTube_u$ that enables greater
    exploration of the system in the future as demonstrated in
    Figure~\ref{fig:tightening}. In addition, nominal trajectories
    can be used to enlarge $\ZZ_f$ according
    to~\eqref{eq:enlargement_of_nominal_terminal_set}.
    \end{samepage}
\section{Numerical example: Safely Learning to control a car}\label{sec:example}
In this section, we apply the proposed PMPSC scheme in order to
safely learn how to drive a simulated autonomous car along
a desired trajectory without leaving a narrow road. For the car
simulation we consider the dynamics
\begin{alignat}{6}
    &\dot x = v\cos(\psi) \qquad &&  \dot \psi = (v/L) \tan (\delta)\tfrac{1}{1+ (v/v_\mathrm{CH})} \nonumber \\  
    &\dot y = v\sin(\psi) \qquad &&  \dot \delta = (1/T_\delta) (u_\delta - \delta) \nonumber \\
    &\dot v = a             \qquad &&  \dot a = (1/T_a)(u_a - a), \label{eq:car_dynamics}
\end{alignat}
with position $(x,y)$ in world coordinates, orientation $\psi$, velocity $v$, acceleration $a$,
and steering angle $\delta$, where the acceleration rate is modeled by a first-order lag with respect
to the desired acceleration (system input) $u_a$, and the angular velocity of the steering angle is also
modeled by a first-order lag with respect to the desired steering angle (system input) $u_\delta$.
The system is subject to the state and input constraints
$||\delta|| \leq 0.7~\mathrm{[rad]}$, $||v||\leq 19.8~\mathrm{[m~s^{-1}]}$,
$-6\leq a \leq 2~\mathrm{[m~s^{-2}]}$, $||u_\delta|| \leq 1.39~\mathrm{[rad]}$,
and $-6\leq u_a \leq 2~\mathrm{[m~s^{-2}]}$, for which the true car dynamics can
be approximately represented by \eqref{eq:car_dynamics}, see e.g.~\cite{kuwata2009real},
with parameters $T_\delta = 0.08~\mathrm{[s]}$, $T_a= 0.3~\mathrm{[s]}$,
$L = 2.9~\mathrm{[m]}$, and $v_{\mathrm{CH}}=20~\mathrm{[m~s^{-2}]}$.
The system is discretized with a sampling time of $0.1~\mathrm{[s]}$.

The learning task is to find a control law that tracks a periodic reference trajectory on a narrow road, which
translates in an additional safety constraint $||y||\leq 1$. The terminal set
according to Assumption \ref{ass:tightened_terminal_invariant_set} is defined as the road center
with angles $\psi=\delta=0$ and acceleration $a=0$, which is a safe set
for~\eqref{eq:car_dynamics} with $\kappa_f = 0$. The planning horizon is selected
to $N=30$ and the model \eqref{eq:general_linear_additive_description} as well as the
PRS set $\mTube_x = \mTube_u$ with probability level $98\%$ is 
computed based on a $30$ second state and input trajectory according to Section \ref{sec:data_driven_design},
see supplementary material for further details.
We use Bayesian Optimization as described in \cite{Neumann2019} for learning a linear
control law, implementing a policy search method that automatically trades off exploration
of the parameter space and exploitation of promising subsets and which does not provide inherent safety
guarantees. As the cost function for each episode, we penalize the deviation from the reference
trajectory quadratically, i.e.,
\begin{align*}
    \sum_{i=1}^{60} (x_{\mathrm{ref}}(i) - x(i))^\top Q (x_{\mathrm{ref}}(i) - x(i))+ u_\LL(i)Ru_\LL(i),
\end{align*}
where $Q = \mathrm{diag}((1 ~1.5 ~1 ~1 ~100 ~100)^\top)$ and $R = \mathrm{diag}((1~1)^\top)$.

While the resulting learning episodes without the PMPSC framework would leave the safety constraints, i.e.\ the road, in
a significant number of samples, as shown in Figure~\ref{fig:pmpsc_learning_episodes}, 
the safety framework enables safe learning in every episode.

In Figure~\ref{fig:pmpsc_safety_interaction}, two example learning episodes from Figure~\ref{fig:pmpsc_learning_episodes}
are shown, where the size of the input modification through the safety framework is
indicated with different circle radii along the trajectories. While in the first episode the safety framework intervenes with the learning-based policy in order to ensure safety, the algorithm safely begins to converge after 30 episodes with significantly
less safety interventions.

In Figure~\ref{fig:performance} we compare the performance of the learning-based
control policy when applied directly against the performance of the safety-enhanced policy using PMPSC
and observe that the safety-ensuring actions yield a slightly slower convergence and
slightly worse performance after learning convergence on average compared
to direct application of the unsafe algorithm.

\begin{figure}[t]
	\centering
	\includegraphics[width=\linewidth]{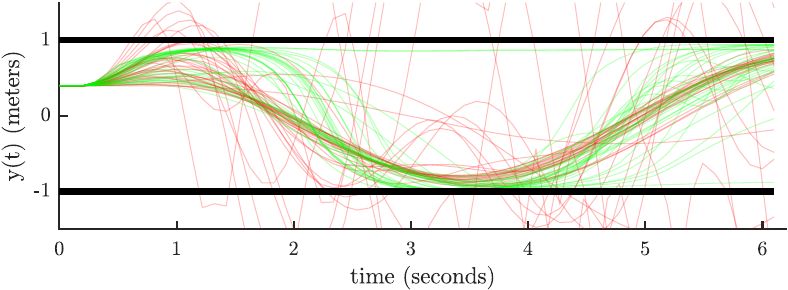}
    \caption{Learning to track a periodic trajectory subject to constraints:
        here we show the first $30$ learning episodes without (red) and with (green) the proposed PMPSC
        framework as well as the safety constraints (black).}\label{fig:pmpsc_learning_episodes}
\end{figure}
\begin{figure}[t]
    \centering
    \includegraphics[width=\linewidth]{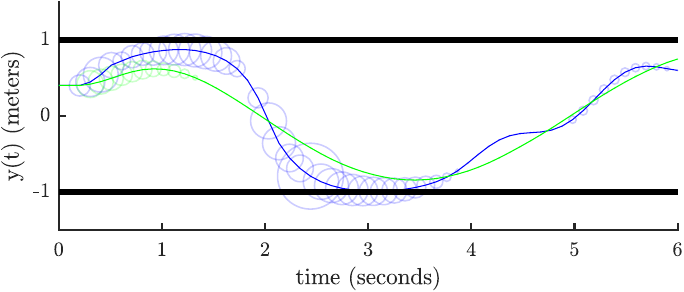}
    \caption{Resulting safe closed-loop trajectories during learning with
        initial policy parameters (blue) and final policy parameters (green).
        The circle radii indicate the relative magnitude of safety-ensuring modifications
        of the learning-based controller.}
    \label{fig:pmpsc_safety_interaction}
\end{figure}
\begin{figure}[t]
    \centering
    \includegraphics[width=\linewidth]{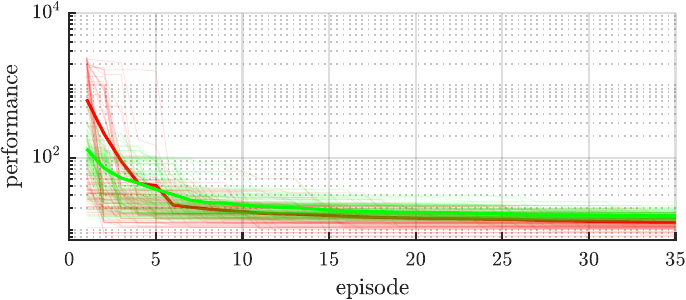}
    \caption{Closed-loop cost for 100 different experiments. Thin
    lines depict experiment samples and thick lines show the corresponding
    mean. Red lines indicate direct application of the learning-based
    controller and green lines illustrate the combination with the
    proposed PMPSC scheme.}\label{fig:performance}
\end{figure}

\section{Conclusion}
This paper has introduced a methodology to enhance arbitrary RL
algorithms with safety guarantees during the process of learning. The scheme is based on
a data-driven, linear belief approximation of the system dynamics 
that is used in order to compute safety policies for the
learning-based controller `on-the-fly'. By proving the existence of a safety policy
at all time steps, safety of the closed-loop system is established.
Principled design steps for the scheme are introduced, based on Bayesian
inference and convex optimization, which require little expert system knowledge
in order to realize safe RL applications.

\appendix
\subsection{Details of numerical example}
\begin{figure}[t]
    \centering
    \includegraphics[width=\linewidth]{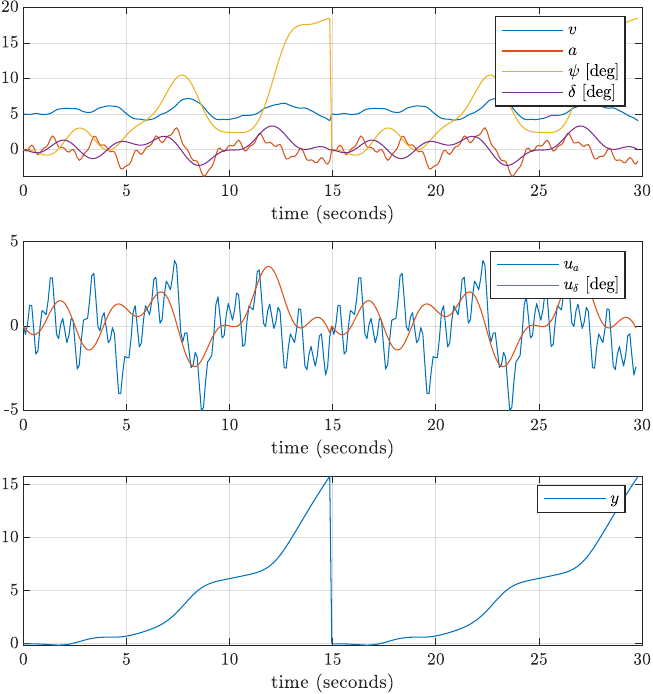}
    \caption{Measurements of system \eqref{eq:car_dynamics}, which are used
        for computing the PIS set $\mTube$.}\label{fig:measurements}
\end{figure}
\begin{figure}[t]
    \centering
    \includegraphics[width=\linewidth]{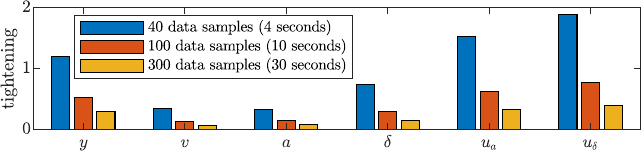}
    \caption{Tightening of state and input
        interval constraints for different numbers of measurements used
        to the design the PMPSC scheme.}\label{fig:tightening}
\end{figure}
\begin{figure}[t]
    \centering
    \includegraphics[width=\linewidth]{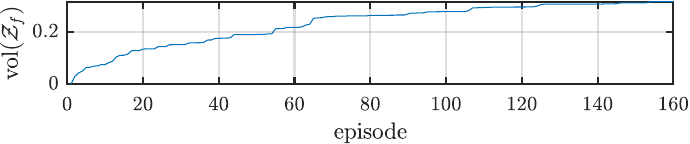}
    \caption{Volume of the terminal set $\ZZ_f$ according
    to~\eqref{eq:enlargement_of_nominal_terminal_set} over learning
    episodes.}\label{fig:z_enlargement}
\end{figure}

The model is computed according to Section \ref{subsec:model_design}
based on measurements of system \eqref{eq:car_dynamics} as
depicted in Figure \ref{fig:measurements}, sensor noise $\sigma_s = 0.01$ and prior distribution $\Sigma_i^p = 10I_n$. The state feedback
\begin{align*}
    \mTubeControl =
        \begin{pmatrix}
            -1.25 &  -0.05  & -2.34 &  -0.75 &  -0.19 &   0.02 \\
            0.02  & -0.69 &  -0.03 &   0.02 &  -5.04 &  -2.85
        \end{pmatrix}
\end{align*}
according to Assumption \ref{ass:general_PRS_tube} is computed according
to the mean dynamics of \eqref{eq:expected_linear_dynamics} using
LQR design.

Applying the procedure described in Section \ref{subsec:design_omega}
with different numbers of measurements as shown in
Figure~\ref{fig:measurements} yield different constraint tightenings
of the interval state and input constraints
as depicted in Figure~\ref{fig:tightening}. The final
tightened input and state constraints that are used
in the numerical example in Section~\ref{sec:example} are given as
$||u_\delta|| \leq 1.39~\mathrm{[rad]}$, $-5.4\leq u_a \leq 1.3~\mathrm{[m~s^{-2}]}$,
$||y|| \leq 0.87~\mathrm{[m]}$, $||\delta|| \leq 0.64~\mathrm{[rad]}$, $||v||\leq 19.97~\mathrm{[m~s^{-1}]}$,
$-4.97\leq a \leq 0.44~\mathrm{[m~s^{-2}]}$,
computed using the Yalmip-toolbox~\cite{Lofberg2004}
together with MOSEK \cite{mosek} to solve the resulting semi-definite program.

Starting from a terminal set $\ZZ_f=\{0\}$ we
illustrate in Figure~\ref{fig:z_enlargement} how
the volume of $\ZZ_f$ can iteratively be enlarged
based on previously calculated nominal
state trajectories at each time step by following Proposition~\ref{prop:enlargement_of_nominal_terminal_set} to reduce conservatism
of the terminal constraint~\eqref{eq:PMPSC_opt_terminal}.

\bibliographystyle{IEEEtran}
\bibliography{bibliography.bib}

\begin{IEEEbiography}[{\includegraphics[width=1in,height=1.25in,clip,keepaspectratio]{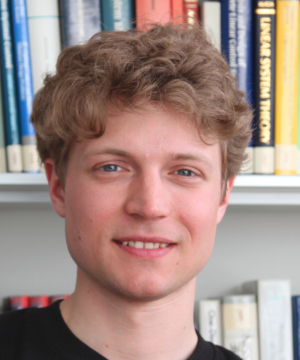}}]{Kim P. Wabersich}
    received his BSc. and MSc. in Engineering Cybernetics from the University of Stuttgart in Germany in 2015 and 2017, respectively.
    He is currently working towards a PhD. degree at the Institute for Dynamic Systems and Control (IDSC) at ETH Zurich.
    During his studies he was a research assistant at the Machine Learning and Robotics Lab (University of Stuttgart) and at the
    Daimler Autonomous Driving Research Center (B\"oblingen, Germany and Sunnyvale, CA).
    His research interests include safe learning-based control and model predictive control.
\end{IEEEbiography}
\begin{IEEEbiography}[{\includegraphics[width=1in,height=1.25in,clip,keepaspectratio]{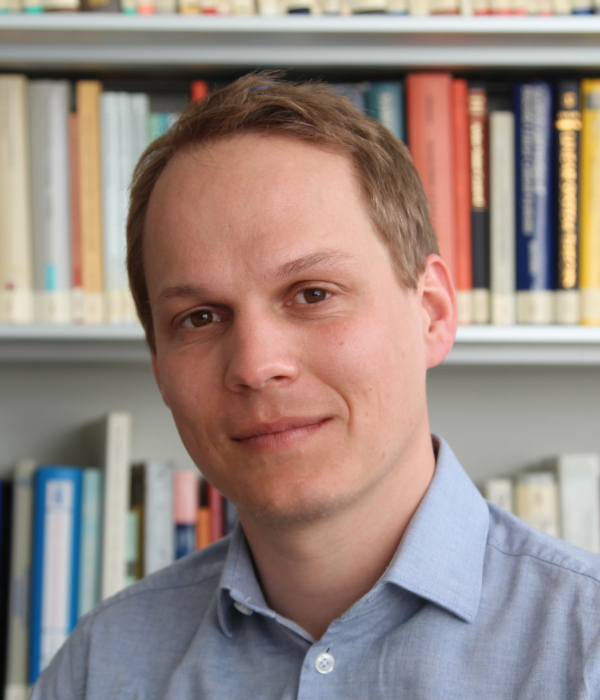}}]{Lukas Hewing}
    is a postdoctoral researcher at the Institute for Dynamic Systems and Control (IDSC) at ETH Zurich, from where he received his Ph.D. in 2020. Prior to this, he received his M.Sc. in Automation Engineering (with distinction) and B.Sc. in Mechanical Engineering from Aachen University in 2015 and 2013, respectively. He was a student research assistant a the Institute of Automatic Control (IRT) and Chair for Medical Information Technology (MedIT) in Aachen, Germany, and conducted a research stay at Tsinghua University, Beijing, China in 2015. His research interests include safe learning-based and stochastic model predictive control.
\end{IEEEbiography}
\begin{IEEEbiography}[{\includegraphics[width=1in,height=1.25in,clip,keepaspectratio]{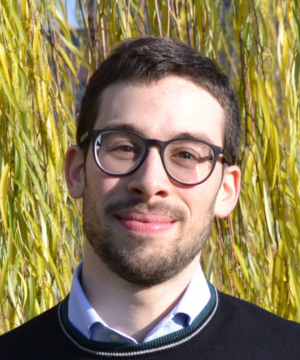}}]{Andrea Carron}
received the Dr. Eng. Bachelors and Masters 
degrees in control engineering from the University of Padova, Padova, Italy, in 2010 and 2012, respectively. 
He received his Ph.D. degree in 2016 from the University of Padova. 
He is currently a Postdoctoral Fellow with the Department of Mechanical and Process Engineering at ETH Zürich.
His interests include safe-learning, learning-based control, 
and nonparametric estimation. 
\end{IEEEbiography}
\begin{IEEEbiography}[{\includegraphics[width=1in,height=1.25in,clip,keepaspectratio]{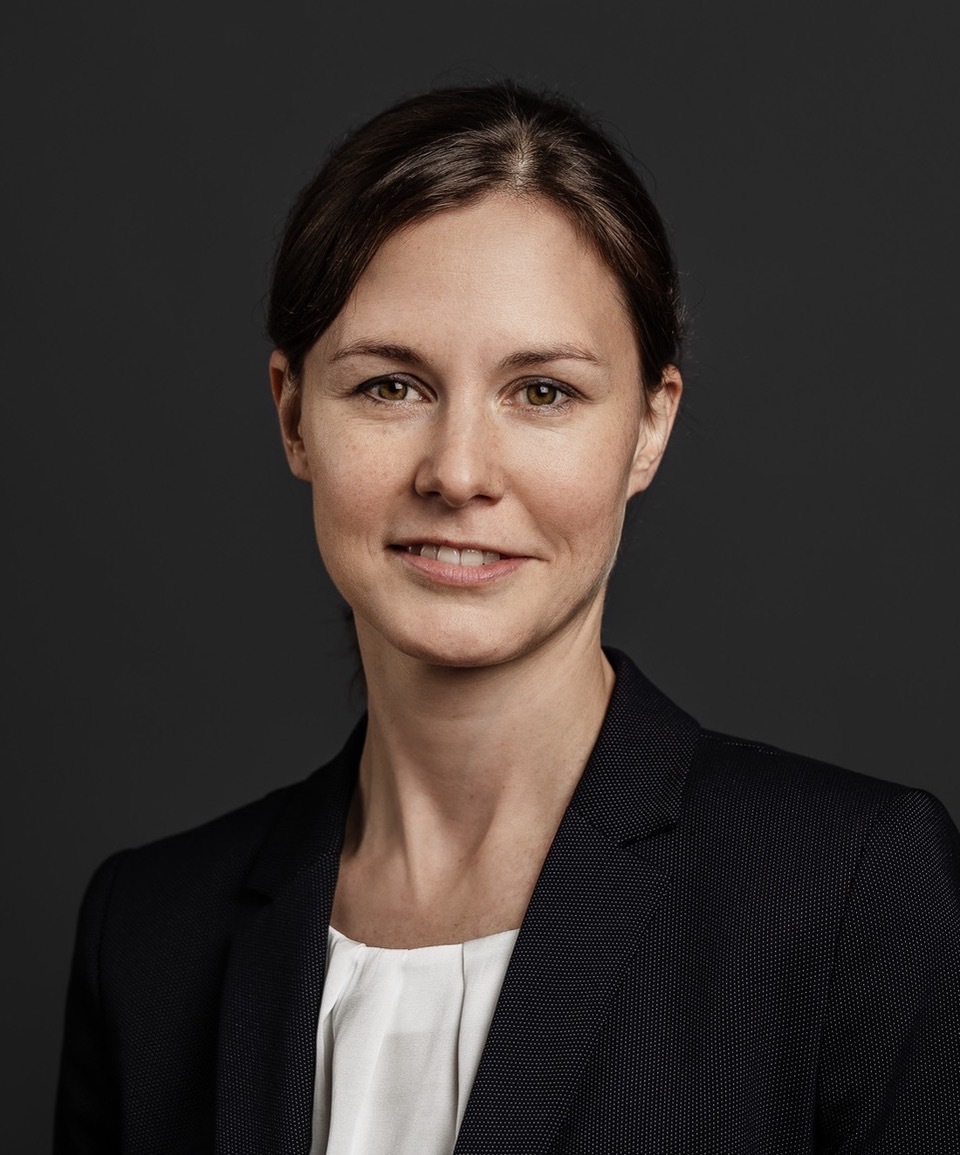}}]{Melanie N. Zeilinger}
    is an Assistant Professor at ETH Zurich, Switzerland. She received the Diploma degree in engineering cybernetics from the University of Stuttgart, Germany, in 2006, and the Ph.D. degree with honors in electrical engineering from ETH Zurich, Switzerland, in 2011. From 2011 to 2012 she was a Postdoctoral Fellow with the Ecole Polytechnique Federale de Lausanne (EPFL), Switzerland. She was a Marie Curie Fellow and Postdoctoral Researcher with the Max Planck Institute for Intelligent Systems, Tübingen, Germany until 2015 and with the Department of Electrical Engineering and Computer Sciences at the University of California at Berkeley, CA, USA, from 2012 to 2014. From 2018 to 2019 she was a professor at the University of Freiburg, Germany. Her current research interests include safe learning-based control, as well as distributed control and optimization, with applications to robotics and human-in-the-loop control.
\end{IEEEbiography}

\end{document}